\def\N{\ensuremath{\mathbb{N}}}
\def\C{\ensuremath{\mathbb{C}}}
\def\P{\ensuremath{\mathbb{P}}}
\newtheorem{lemma}{Lemma}
\newtheorem{proposition}{Proposition}
\newtheorem{theorem}{Theorem}
\newtheorem{corollary}{Corollary}
\def\diff{\ensuremath{\mathop{}\mathopen{}\mathrm{d}}}
\newcommand{\ind}[1]{\ensuremath{\mathbbm{1}_{\left\{ #1 \right\}}}}
\newcommand{\abs}[1]{\ensuremath{\left| #1 \right|}}
\newcommand{\var}[1]{\ensuremath{\underset{#1}{\mathrm{var\;}}}}
\title{Analysis of a trunk reservation policy \\ in the framework of fog computing}
\author[F. Guillemin]{Fabrice Guillemin}
\address[F. Guillemin]{CNC/NCA Orange Labs2, Avenue Pierre Marzin, 22300 Lannion, France}
\email{Fabrice.Guillemin@orange.com}
\author[G. Thompson]{Guilherme Thompson}
\address[G. Thompson]{INRIA Paris, 2 rue Simone Iff, 75589 Paris Cedex 12, France}
\email{Guilherme.Thompson@inria.fr}
\begin{document}
\begin{abstract}
We analyze in this paper a system composed of two data centers with limited capacity in terms of servers. When one request for a single server is blocked at the first data center, this request is forwarded to the second one. To protect the single server requests originally assigned to the second data center, a trunk reservation policy is introduced (i.e., a redirected request is accepted only if there is a sufficient number of free servers at the second data center). After rescaling the system by assuming that there are many servers in both data centers and high request arrival rates, we are led to analyze a random walk in the quarter plane, which has the particularity of having non constant reflecting conditions on one boundary of the quarter plane. Contrary to usual reflected random walks, to compute the stationary distribution of the presented random walk, we have to determine three unknown functions, one polynomial and two infinite generating functions. We show that the coefficients of the polynomial are solutions to a linear system. After solving this linear system, we are able to compute the two other unknown functions and the blocking probabilities at both data centers. Numerical experiments are eventually performed to estimate the gain achieved by the trunk reservation policy.
\end{abstract}

\maketitle
\hrule
\vspace{-3mm}
\setcounter{tocdepth}{1}
\tableofcontents
\vspace{-10mm}
\hrule
\bigskip



\section{Introduction}

Fog computing \cite{Bonomi,Shenker,Rai,Wood} is considered by many actors of the telecommunication ecosystem as a major breakthrough in the design of networks for both network operators and content providers. For the former, deploying storage and computing DCs at the edge of the network enables them to reduce the load within the network and on critical links such as peering links. In addition, network operators can take benefit of these DCs to dynamically instantiate virtualized network functions. Content providers can take benefit of distributed storage and computing DCs to optimize service platform placement and thus to improve the quality experienced by end users.

Fog computing relies on distributed data centers (DCs), which are much smaller than big centralized DCs generally used in cloud computing. Because of potential resource limitation, user requests may be blocked if resources are exhausted at a (small) DC. This is a key difference with cloud computing where resources are often considered as infinite. In this paper, we consider the case of a computing resource service where users request servers available at a DC. If no server is available, then a user request may be blocked.

To reduce the blocking probability, it may be suitable that DCs collaborate. This is certainly a key issue in fog computing, which makes the design of networks and fog computing very different from that of cloud computing. Along this line of investigations, an offloading scheme has been investigated in \cite{FGRT}, where requests blocked at a DC are forwarded to another one with a given probability. In this paper, we investigate the case when a blocked request is systematically forwarded to another DC but to protect those requests which are originally assigned to that DC, a redirected request is accepted only if there is a sufficient large number of idle servers. In the framework of telephone networks, this policy is known as \emph{trunk reservation} \cite{Ross}.

In the following, we consider the case of two DCs, where the trunk reservation policy is applied in one server only; the analysis of the case when the policy is applied in both DCs is a straightforward extension of the case considered but involves much more computations. We further simplify the system by reasonably assuming that both DCs have a large number of servers. From a theoretical point of view, this leads us to rescale the system and to consider limiting processes. The eventual goal of the present analysis is to estimate the gain achieved by the trunk reservation policy. 

By considering the number of free servers in both DCs, we are led after rescaling to analyze a random walk in the quarter plane. This kind of process has been extensively studied in the technical literature (see for instance the book by Fayolle \emph{et al} \cite{FIM}). For the random walk appearing in this paper, even if the kernel is similar to that analyzed in \cite{FayolleIas} (and in \cite{FGRT}), the key difference is that the reflecting conditions on the boundaries of the quarter plane are not constant. More precisely, the reflecting coefficients in the negative vertical direction along the $y$-axis take three different values depending on a given threshold (namely, the trunk reservation threshold). 

This simple difference with usual random walks in the quarter plane makes the analysis much more challenging. Contrary to the usual case which consists of determining two unknown functions, we have in the present case to decompose one unknown function into two pieces (one polynomial and one infinite generating function) and thus to determine three unknown functions. We show that the coefficients of the unknown polynomial can be computed by solving a linear system. Once this polynomial is determined, the two other functions can be derived. This eventually allows us to compute the blocking probabilities at the two DCs and to estimate the efficiency of the trunk reservation policy in the framework of fog computing.

This paper is organized as follows: In Section~\ref{model}, we introduce the notation and we show convergence results for the rescaled system. We analyze in Section~\ref{rw}, the limiting random walk, in particular its kernel. The associated boundary value problems are formulated and solved in Section~\ref{bvp}. Finally, some numerical results are discussed in Section~\ref{numeric}.

\section{Model description}\label{model}
\subsection{Notation}

We consider in this paper two DCs in parallel. The first one is equipped with $ C_1 $ servers and serves customers arriving according to a Poisson process with rate $ \Lambda_1 $ and requesting exponentially distributed service times with mean $1/ \mu_1 $ (a customer if accepted occupies a single server of the DC); the number of busy servers in this first DCs is denoted by $N_1(t)$ at time $t$. Similarly, the second DCs is equipped with $ C_2 $ servers and accommodate service requests arriving according to a Poisson process with rate $ \Lambda_2 $ and service demands exponentially distributed with mean $1/ \mu_2 $; the number of requests in progress is denoted by $N_2(t)$ at time $t$. Note that the system being with finite capacity is always stable.

To reduce the blocking probability at the first service DC without exhausting the resources of the second one, we assume that when DCs 1 is full and there are at least $a$ servers available at DC 2, then requests arriving at DC 1 are accommodated by DC 2.

Figure~\ref{figModel} shows how both DCs deal with this cooperative scheme. Dashed lines represent the flows of blocked requests.

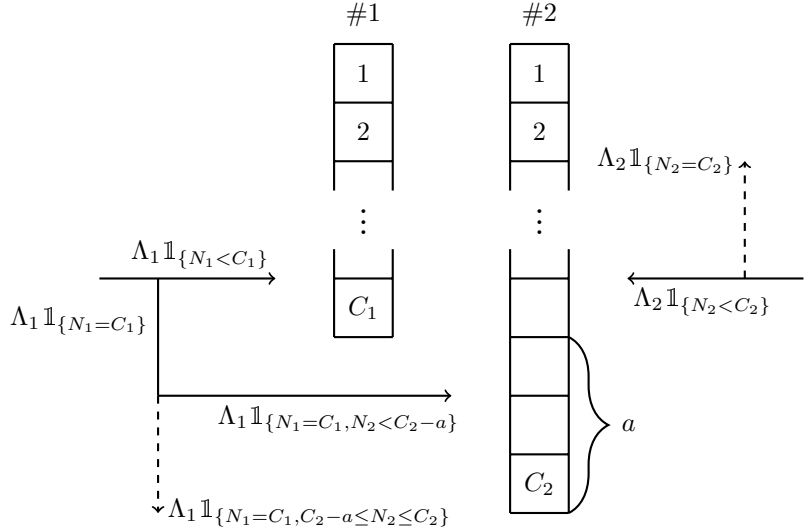
\begin{figure}[ht]
	\centering
	\begin{tikzpicture}[scale=0.78]

		\draw[thick] (1,2.5) grid +(1,2.5);
		\draw[thick] (1,0) grid +(1,1.5);		

		\draw[thick] (4,2.5) grid +(1,2.5);
		\draw[thick] (4,-3) grid +(1,4.5);

		\node at (1.5,5.5) {\#1};
		\node at (1.5,4.5) {$1$};
		\node at (1.5,3.5) {$2$};
		\node at (1.5,0.5) {$ C_1 $};
		\node at (1.5,2.1) {\huge$\vdots$};	

		\node at (4.5,5.5) {\#2};
		\node at (4.5,4.5) {$1$};
		\node at (4.5,3.5) {$2$};
		\node at (4.5,-2.5) {$ C_2 $};
		\node at (4.5,2.1) {\huge$\vdots$};
		
		\draw [decorate,thick, decoration={brace,mirror,amplitude=15pt}]
			(5.0,-3.0) -- ++ (0,3.0) node [black,midway,xshift=.8cm] {$a$};

		\draw[->, thick]
			(-3,1) -- ++	(3,0)	node[above,xshift=-1cm]	{$ \Lambda_1 \ind{N_1 < C_1 }$};
		\draw[-, thick]
			(-2,1) -- ++ (0,-2)	node[left, yshift=1cm]	{$ \Lambda_1 \ind{N_1 = C_1 }$};
		\draw[->, thick]
			(-2,-1) -- ++	(5,0)	node[below,xshift=-1.5cm]	{$ \Lambda_1 \ind{N_1 = C_1 , N_2 < C_2 - a}$};
		\draw[->, dashed, thick]
			(-2,-1) -- ++	(0,-2)	node[right]	{$ \Lambda_1 \ind{N_1 = C_1 , C_2 - a \leq N_2 \leq C_2 }$};

		\draw[->, thick]
			(9,1) -- ++	(-3,0)	node[below,xshift=1cm]	{$ \Lambda_2 \ind{N_2 < C_2 }$};
		\draw[->, dashed, thick]
			(8,1) -- ++	(0,2)	node[left]	{$ \Lambda_2 \ind{N_2 = C_2 }$};
	\end{tikzpicture}
	\caption{Policy implementation scheme} \label{figModel}
\end{figure}

Owing to the Poisson arrival and exponential service time assumptions, the process $(N(t))=((N_1(t), N_2(t)),\, t \geq 0)$ is a Markov chain, which takes values in the set $\{0, \ldots, C_1 \} \times \{0, \ldots, C_2 \}$. The transition rates of the Markov chain $(N(t))$ are given by
\begin{align*}
	q(N,N+(k,\ell)) =
	\begin{cases}
		 \Lambda_1 \ind{N_1< C_1 }	\quad & \text{if } (k,\ell)=(1,0)\\
		 \Lambda_2 \ind{N_2< C_2 } + \Lambda_1 \ind{N_1 = C_1 , N_2 < C_2 - a} \quad & \text{if } (k,\ell)=(0,1)\\
		 \mu_1 N_1 				\quad & \text{if } (k,\ell)=(-1,0)\\
		 \mu_2 N_2 				\quad & \text{if } (k,\ell)=(0,-1).
	\end{cases}
\end{align*}

In the following, we consider the process $(m(t)) = (( C_1 -N_1(t), C_2 - N_2(t)),\, t \geq 0)$, describing the number of idle servers in both DCs. In the next section, we investigate the case when the arrival rates at DCs are scaled up by a factor $\nu$.

\subsection{Rescaled system}
Let us assume that the arrival rates $ \Lambda_1 $ and $ \Lambda_2 $ are scaled up by a factor $\nu$, i.e., $ \Lambda_1 = \nu \lambda_1 $ and $ \Lambda_2 = \nu \lambda_2 $ for some factor $\nu$ and real $\lambda_1 >0$ and $ \lambda_2 >0$. We further assume that the capacities $ C_1 $ and $ C_2 $ scale with $\nu$, namely $ C_1 = \nu c_1 $ and $ C_2 = \nu c_2 $ for some positive constants $ c_1 $ and $ c_2 $. To indicate the dependence of the numbers of occupied and idle servers upon $\nu$, we write $N_i^{[\nu]}$ and $m_i^{[\nu]}$ instead of $N_i$ and $m_i$ to denote respectively the number of occupied and idle servers in DCs $i$ for $i=1,2$. 

With the above hypotheses, we are led to consider $\left(m^{[\nu]}(t)\right)$ as a random walk. For this purpose, let us introduce the random walk $(n(t)) = ((n_1(t), n_2(t)),\, t \geq 0)$ in the positive quadrant with transition rates for $n \in \N_{*}^{2}$
\begin{align*}
	r((n_1, n_2),(n_1 + k, n_2 + \ell)) =
	\begin{dcases}
		 \lambda_1 				\quad & \text{if } (k,\ell)=(-1,0)\\
		 \lambda_2 				\quad & \text{if } (k,\ell)=(0,-1)\\
		 \mu_1 c_1 				\quad & \text{if } (k,\ell)=(1,0)\\
		 \mu_2 c_2 				\quad & \text{if } (k,\ell)=(0,1)
	\end{dcases}
\end{align*}
and the reflecting conditions for $n_1 >0$ and $n_2=0$
\begin{align*}
	r((n_1,0),(n_1 + k, \ell)) =
	\begin{dcases}
		 \lambda_1 \ind{n_1>0}	\quad & \text{if } (k,\ell)=(-1,0)\\
		 \mu_1 c_1 				\quad & \text{if } (k,\ell)=(1,0)\\
		 \mu_2 c_2 				\quad & \text{if } (k,\ell)=(0,1),
\end{dcases}
\end{align*}
for $n_1=0$ and $n_2>0$
\begin{align*}
r((0, n_2),(k, n_2 + \ell)) = \begin{dcases}
		 \lambda_2 + \lambda_1 \ind{n_2>a} \qquad & \text{if } (k,\ell)=(-1,0)\\
		 \mu_1 c_1 				\quad & \text{if } (k,\ell)=(1,0)\\
		 \mu_2 c_2 				\quad & \text{if } (k,\ell)=(0,1),
	\end{dcases}
\end{align*}
and for $n = 0$
\begin{align*}
	r(0,(k, \ell)) =
	\begin{dcases}
		 \mu_1 c_1 				\quad & \text{if } (k,\ell)=(1,0)\\
		 \mu_2 c_2 				\quad & \text{if } (k,\ell)=(0,1),
	\end{dcases}
\end{align*}
where $a$ is the threshold introduced in the previous section. 

\begin{proposition}
\label{conv}
If $m^{[\nu]}(0)=(k,\ell)\in\N^2$ is fixed then, for the convergence in distribution,
$$
\lim_{\nu\to+ \infty} \left(m^{[\nu]}(t/\nu), \, t \geq 0\right)= (n(t), \, t\geq 0).
$$
\end{proposition}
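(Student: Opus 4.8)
The plan is to realize both processes on a common probability space via a standard Poisson-driven construction and to show that, under the space-time rescaling $t\mapsto t/\nu$, the rescaled process $m^{[\nu]}(t/\nu)$ has, on compact time intervals, exactly the same jump mechanism as $(n(t))$ with vanishing probability of any discrepancy. First I would rewrite the dynamics of $m^{[\nu]}$: since $m_i^{[\nu]}=C_i-N_i^{[\nu]}$, a departure from DC $i$ (rate $\mu_i N_i^{[\nu]}$) becomes a $+1$ jump of $m_i^{[\nu]}$ with rate $\mu_i(C_i-m_i^{[\nu]})=\nu\mu_i c_i-\mu_i m_i^{[\nu]}$, while an arrival accepted at DC $i$ becomes a $-1$ jump with a rate that is $\nu\lambda_i$ on the relevant region. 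After dividing time by $\nu$, the downward rates become exactly $\lambda_1,\lambda_2$ (plus the offloading term $\lambda_1\ind{N_2^{[\nu]}<C_2-a}$ on the face $N_1^{[\nu]}=C_1$, i.e. $m_1^{[\nu]}=0$), and the upward rates become $\mu_i c_i-\mu_i m_i^{[\nu]}/\nu$. The key observation is that as long as the process stays in a bounded region — say $m^{[\nu]}(s/\nu)$ remains in $\{0,\dots,K\}^2$ for $s\le t$ — the term $\mu_i m_i^{[\nu]}/\nu$ is $O(K/\nu)$, so the upward rates converge uniformly to $\mu_i c_i$; similarly the condition $C_2-a\le N_2^{[\nu]}\le C_2$ defining the blocked-offload region corresponds to $m_2^{[\nu]}\le a$, which matches the threshold appearing in the transition rates of $(n(t))$.

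The main step is then a coupling argument. I would drive $m^{[\nu]}$ and $n$ by the same collection of independent unit-rate Poisson processes (one per jump direction, thinned by the appropriate state-dependent rates), starting both from $(k,\ell)$. On the event that neither process has left $\{0,\dots,K\}^2$ up to time $t$, the two processes use identical downward clocks and upward clocks whose intensities differ by at most $C(K)/\nu$; by a Gronwall-type / first-discrepancy-time estimate, the probability that they ever differ before time $t$ is $O(t\,K/\nu)\to 0$. It remains to control the excursions to large values. Because $(n(t))$ is positive recurrent — its upward rates are bounded constants $\mu_i c_i$ while its downward rates are bounded below by $\lambda_i$ off the axes (this is exactly the ergodicity regime; if one prefers, one simply notes each coordinate is dominated by an $M/M/\infty$-type or random-walk comparison with geometric tails) — one can choose $K=K(\varepsilon,t)$ so that $\P(\sup_{s\le t} \max_i n_i(s)>K)<\varepsilon$, uniformly, and the coupled $m^{[\nu]}$ inherits the same bound on the coupling event. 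Combining, for every $\varepsilon>0$ and every $t$, $\P(\exists s\le t:\ m^{[\nu]}(s/\nu)\ne n(s))\le \varepsilon + O_\varepsilon(1/\nu)$, which gives convergence in distribution in the Skorokhod space $D([0,\infty),\N^2)$ (indeed a stronger uniform-on-compacts statement), since the limit is a pure-jump process with a.s. no jump at a fixed time.

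The point I expect to require the most care is the uniform tightness / non-explosion control near the boundary: the reflecting rates of $m^{[\nu]}$ are only asymptotically those of $(n(t))$, and one must make sure the error terms $\mu_i m_i^{[\nu]}/\nu$ do not accumulate over the $O(\nu)$ jumps occurring in a time interval of length $t/\nu$ in original time (equivalently, length $t$ after rescaling). The clean way is the first-exit-time truncation above: stop everything at $\tau_K=\inf\{s:\max_i m_i^{[\nu]}(s/\nu)\vee\max_i n_i(s)>K\}$, prove exact pathwise coupling on $[0,t\wedge\tau_K]$ up to an event of probability $O(K/\nu)$, then let $\nu\to\infty$ and afterwards $K\to\infty$ using the geometric tail bound for $(n(t))$. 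A secondary technical point is checking that the corner and axis transition rates of $m^{[\nu]}$ really do reduce, after rescaling, to the four displayed rate tables for $(n(t))$ — this is the bookkeeping translation $N_i^{[\nu]}=C_i-m_i^{[\nu]}$ applied to $q(\cdot,\cdot)$, and it is routine but should be written out to justify that the boundary behaviour (including the $\ind{n_2>a}$ term on the $y$-axis) is captured exactly in the limit.
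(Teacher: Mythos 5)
Your proof is essentially correct, but it takes a genuinely different route from the paper. The paper's proof is the classical martingale-problem argument: it writes the semimartingale decomposition of $f\left(m^{[\nu]}(t/\nu)\right)$ for $f$ with finite support, observes that the drift coefficients $\mu_i\left(\nu c_i-m_i^{[\nu]}\right)/\nu$ converge to $\mu_i c_i$, invokes Jacod--Shiryaev for tightness, and identifies any subsequential limit as the unique solution of the martingale problem for the jump matrix $R$ of $(n(t))$ (in the spirit of Hunt--Kurtz). You instead build an explicit coupling: realize both processes from the same Poisson random measures, note that after the time change the downward (arrival) rates and all boundary indicators (including the offload term on $m_1^{[\nu]}=0$, $m_2^{[\nu]}>a$) coincide exactly as long as the paths agree, so the only source of discrepancy is the $O(K/\nu)$ defect in the upward rates inside a box of size $K$, and then remove the truncation. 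Your route is more elementary (no tightness criterion, no uniqueness of the martingale problem needed) and yields a stronger conclusion --- convergence in probability, uniformly on compact time intervals, for the coupled versions --- while the paper's argument is shorter on the page because it delegates the technical work to cited theorems and generalizes more readily to settings where an exact coupling is awkward.

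One point to fix in your write-up: the truncation step should not be justified by positive recurrence of $(n(t))$. Proposition~\ref{conv} is stated without the stability condition~\eqref{stabcond}, and ergodicity is neither assumed nor needed for a finite-horizon limit. The correct (and simpler) justification is the one you give parenthetically: each coordinate of $(n(t))$, and likewise of $m^{[\nu]}(\cdot/\nu)$, has upward jump rate bounded by $\mu_i c_i$, hence is stochastically dominated on $[0,t]$ by its initial value plus a Poisson process of rate $\mu_i c_i$ (this is exactly the domination the paper also uses), which gives $\P\left(\sup_{s\le t}\max_i n_i(s)>K\right)\to 0$ as $K\to\infty$ uniformly in $\nu$ on the coupling event. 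With that substitution, the argument is complete and unconditional.
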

\begin{proof}
If $f$ is a function on $\N^2$ with finite support then classical stochastic calculus gives the relation
\begin{align} \label{eqN}	
		& f\left(m^{[\nu]}(t/\nu)\right) = f(k,\ell) + M^{[\nu]}(t/\nu)& \\		
		&+ \int_{0}^{t} \mu_1 \frac{\nu c_1 - m_1^{[\nu]} (s/\nu)}{\nu}
		\left[f \left(m^{[\nu]}(s/\nu) + e_1 \right) - f \left(m^{[\nu]}(s/\nu) \right)\right] \diff s \notag \\
		&+ \int_{0}^{t} \mu_2 \frac{\nu c_2 - m_2^{[\nu]} (s/\nu)}{\nu}
		\left[f \left(m^{[\nu]}(s/\nu) + e_2 \right) - f \left(m^{[\nu]}(s/\nu) \right)\right] \diff s \notag \\
		&+ \int_{0}^{t} \lambda_1 \ind{m_1^{[\nu]}(s/\nu)>0}
		\left[f \left(m^{[\nu]}(s/\nu) - e_1 \right) - f \left(m^{[\nu]}(s/\nu) \right)\right] \diff s \notag\\
		&+ \int_{0}^{t} \lambda_2 \ind{m_2^{[\nu]}(s/\nu)>0}
		\left[f \left(m^{[\nu]}(s/\nu) - e_2 \right) - f \left(m^{[\nu]}(s/\nu) \right)\right] \diff s \notag\\
		&+ \int_{0}^{t} \lambda_1 \ind{{m_1^{[\nu]}(s/\nu)=0},{m_2^{[\nu]}(s/\nu)>a}} \left[f \left(m^{[\nu]}(s/\nu) - e_2 \right) - f \left(m^{[\nu]}(s/\nu) \right)\right] \diff s \notag
\end{align}
where $M^{[\nu]}(t)=\left(M_1^{[\nu]}(t),M_2^{[\nu]}(t)\right)$ is a martingale, $e_1=(1,0)$ and $e_2=(0,1)$. 

For $i=1$, $2$, since the process $\left(N_i^{[\nu]}(s/\nu)\right)$ is stochastically bounded by a Poisson process with rate $\mu_ic_i$, hence, for the convergence of processes, 
$$
\lim_{\nu\to+ \infty} \left(\frac{\nu c_i -m_i^{[\nu]}(s/\nu)}{\nu}\right)=(c_i).
$$
By using by Theorem~4.5 page~320 of Jacod and Shiryaev~\cite{Jacod}, one gets that the sequence of processes $\left(m^{[\nu]}(t/\nu),t\geq 0\right)$ is tight. If $(z(t))$ is the limit of some convergent subsequence $\left(m^{[\nu_k]}(t/\nu_k)\right)$, then, if $R=\left(r(a,b), \; (a,b) \in \N^2 \times \N^2\right)$ is the jump matrix of $(n(t))$, Relation~\eqref{eqN} gives that
$$
\left(f(z(t))-f(k,\ell)-\int_0^t R \cdot f(z(u)) \diff u \right)
$$
is a martingale. We skip some of the technical details, see Hunt and Kurtz~\cite{Hunt} for a similar context for example. This shows that $(z(t))$ is the Markov process with transition matrix $R$, hence $(z(t))$ has the same distribution as $(n(t))$. See Section~IV-20 of Rogers and Williams. Since this is the only possible limit, the convergence in distribution is proved. 
\end{proof}

By using the results of \cite{FayolleStab}, we have the following stability condition.

\begin{lemma}
A stationary regime exists for the random walk $(n(t))$ if and only if 
\begin{equation} \label{stabcond}
	 \lambda_1 > \mu_1 c_1 \mbox{ and } \mu_1 c_1 + \mu_2 c_2 < \lambda_1 + \lambda_2 .
\end{equation}
\end{lemma}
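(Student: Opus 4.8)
The plan is to apply the stability criterion for random walks in the quarter plane as developed in Fayolle--Malyshev--Menshikov (and the specific reference \cite{FayolleStab}), which reduces the question to examining the mean drift vectors in the interior and on each boundary. First I would compute the interior drift of $(n(t))$: from the transition rates in $\N_*^2$, the horizontal component is $\mu_1 c_1 - \lambda_1$ and the vertical component is $\mu_2 c_2 - \lambda_2$. Since we are on a bounded-from-below lattice walk that is homogeneous in the interior, the relevant dichotomy is governed by the sign of this interior drift $M = (\mu_1 c_1 - \lambda_1,\; \mu_2 c_2 - \lambda_2)$ together with the induced drifts on the two axes.

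Next I would treat the three cases of the classical classification according to the orientation of $M$. If both components of $M$ are negative, i.e. $\lambda_1 > \mu_1 c_1$ and $\lambda_2 > \mu_2 c_2$, the walk is pushed toward the origin from the interior and one checks the boundary drifts point inward (the reflection terms add $\lambda_i$ in the $-e_i$ direction only when $n_i>0$, so on the $x$-axis the vertical drift is $\mu_2 c_2$, pointing up, and on the $y$-axis the horizontal drift is $\mu_1 c_1$ or $\mu_1 c_1 + \lambda_1 \ge 0$, pointing right, which is the favourable configuration); this yields ergodicity. If $M$ has one positive and one negative component, one must look at the drift on the boundary that the walk is pushed against and require that the projected one-dimensional walk there be negative-recurrent; if the horizontal interior drift $\mu_1 c_1 - \lambda_1$ is negative but $\mu_2 c_2 - \lambda_2$ is positive, the walk drifts to the $x$-axis, and on the $x$-axis the net horizontal motion is $\mu_1 c_1 - \lambda_1 < 0$, so it is recurrent, and then the second inequality $\mu_1 c_1 + \mu_2 c_2 < \lambda_1 + \lambda_2$ encodes the refined condition that the two-dimensional walk does not escape along that boundary — essentially comparing the ``speed toward the corner'' induced once the walk is near the axis. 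The symmetric sub-case with roles exchanged, as well as the case where $M$ points into the open quadrant, are excluded exactly by the stated inequalities: if $\mu_1 c_1 \ge \lambda_1$ the first condition fails and indeed the $x$-coordinate is transient (drift to $+\infty$ on and off the axis), while if $\mu_1 c_1 + \mu_2 c_2 \ge \lambda_1 + \lambda_2$ the ``total'' drift is outward and the walk is non-ergodic.

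The step I expect to be the main obstacle — or rather the one requiring the most care — is verifying that the non-constant reflection on the $y$-axis (the threshold $a$, which changes the horizontal boundary rate from $\lambda_2$-type behaviour depending on whether $n_2 > a$) does not affect the stability classification. The point to make is that the perturbation of the boundary dynamics is confined to the finite strip $\{n_1 = 0,\ 0 \le n_2 \le a\}$, hence it is a bounded (finite-range, finitely-supported) modification of the boundary reflection; by the standard robustness of the Foster--Lyapunov / second-vector-field arguments in \cite{FayolleStab} (one can use the same quadratic Lyapunov function, and a finite modification of the generator only changes $\mathcal{L}V$ on a finite set), ergodicity and transience are unchanged. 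So the conclusion is that $(n(t))$ is positive recurrent if and only if the interior drift and the relevant boundary-projected drifts satisfy the inequalities, which simplify precisely to \eqref{stabcond}; I would close by stating that the explicit translation of the three-case criterion of \cite{FayolleStab} to the present rates gives \eqref{stabcond}, with no contribution from $a$.
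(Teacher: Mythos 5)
Your overall strategy is the same as the paper's: the paper in fact gives no proof at all, it simply invokes the stability classification of random walks in the quarter plane from the cited reference, and your reduction to interior and boundary mean drifts, together with the (correct and worth stating) observation that the threshold $a$ only alters the rates on the finite set $\{n_1=0,\ 0\le n_2\le a\}$ and therefore cannot change the ergodicity classification, is the right frame. The ``only if'' part via the autonomous first coordinate (an $M/M/1$ queue with arrival rate $\mu_1 c_1$ and service rate $\lambda_1$, unaffected by the coupling) is also fine.

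There is, however, a concrete error precisely where the non-trivial half of \eqref{stabcond} has to be produced. In the mixed case $\mu_1 c_1<\lambda_1$, $\mu_2 c_2>\lambda_2$ the interior drift $M=(\mu_1 c_1-\lambda_1,\ \mu_2 c_2-\lambda_2)$ pushes the walk against the \emph{vertical} axis $\{n_1=0\}$ (the first coordinate decreases, the second increases), not against the $x$-axis as you write; the $x$-axis criterion of the classification, $M_xM_y'-M_yM_x'=(\mu_1 c_1-\lambda_1)\lambda_2<0$ with $M'=(\mu_1 c_1-\lambda_1,\ \mu_2 c_2)$, is automatically satisfied and can never yield $\mu_1 c_1+\mu_2 c_2<\lambda_1+\lambda_2$. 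The second inequality must come from the vertical boundary, whose drift for $n_2>a$ is $M''=(\mu_1 c_1,\ \mu_2 c_2-\lambda_1-\lambda_2)$ (note also that $\lambda_1\ind{n_2>a}$ adds to the \emph{downward} rate there, so the horizontal boundary drift is $\mu_1 c_1$, never $\mu_1 c_1+\lambda_1$); the criterion $M_yM_x''-M_xM_y''<0$ expands to $\lambda_1(\mu_1 c_1+\mu_2 c_2-\lambda_1-\lambda_2)<0$, which is exactly the missing inequality. Equivalently, since the first coordinate is empty a fraction $1-\mu_1 c_1/\lambda_1$ of the time, the effective downward rate of the second coordinate far from the origin is $\lambda_2+\lambda_1(1-\mu_1 c_1/\lambda_1)$, and requiring this to exceed $\mu_2 c_2$ gives the same condition (this is the tandem-queue picture the paper sketches right after the lemma). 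You assert that ``the second inequality encodes the refined condition'' without carrying out this computation, and attached to the wrong boundary the computation would not produce it; likewise, in the case where both interior drift components are negative, ``boundary drifts point inward'' is not the criterion of the cited classification (one needs the two cross-product conditions, which here do hold, the second being automatic because both inequalities of \eqref{stabcond} are implied in that case). With the boundary corrected and the two cross-product computations written out, your argument becomes a complete proof.
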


To conclude this section, let us note that the limiting random walk $(n(t))$ describes the number of customers in two $M/M/1$ queues in tandem. The arrival rate at the first queue is $ \mu_1 c_1 $ and the service rate is $ \lambda_1 $; this queue is independent of the second one and is stable under Condition~\eqref{stabcond}. The arrival rate at the second queue is $ \mu_2 c_2 $ and the service rate is $ \lambda_2 $ plus $ \lambda_1 $ when the first queue is empty and there are more than $a$ customers in the second queue. This last term introduces a coupling between the two queues. See Figure~\ref{figlimitingrw} for an illustration.

\begin{figure}[ht]
	\centering	\begin{tikzpicture}
		\draw[->]
		 (0,0) -- (6.5,0) node[below] {$n_{1}$};
		\draw[->]
		 (0,0) -- (0,6.5) node[left] {$n_{2}$};

		\node at (-.2,-.2) {$0$};
		\node at (6.2,6.2) {$\N^{2}$};
		 
		\draw[->, thick]
			(0,5.25) -- ++	(0,.75)	node[right]	{$\mu_{2} c_{2}$};
		\draw[->, thick]
			(0,5.25) -- ++	(0,-.75)	node[right]	{$\lambda_{1} + \lambda_{2}$}; 
		\draw[->, thick]
			(0,5.25) -- ++	(.75,0)	node[right]	{$\mu_{1} c_{1}$}; 

		\draw[shift={(0,4)}] (2pt,0pt) -- (-2pt,0pt) node[left] {$a$};

		\draw[->, thick]
			(0,2.75) -- ++	(0,.75)	node[right]	{$\mu_{2} c_{2}$};
		\draw[->, thick]
			(0,2.75) -- ++	(0,-.75)	node[right]	{$\lambda_{2}$}; 
		\draw[->, thick]
			(0,2.75) -- ++	(.75,0)	node[right]	{$\mu_{1} c_{1}$}; 

		\draw[->, thick]
			(3.5,0) -- ++	(.75,0)	node[above]	{$\mu_{1} c_{1}$};
		\draw[->, thick]
			(3.5,0) -- ++	(-.75,0)	node[above]	{$\lambda_{1}$}; 
		\draw[->, thick]
			(3.5,0) -- ++	(0,.75)	node[above]	{$\mu_{2} c_{2}$};

		\draw[->, thick]
			(0,0) -- ++	(.75,0)	node[above]	{$\mu_{1} c_{1}$};
		\draw[->, thick]
			(0,0) -- ++	(0,.75)	node[right]	{$\mu_{2} c_{2}$}; 

		\draw[<->, thick]
			(3,3.75)	node[left]	{$\lambda_{1}$} --	(4.5,3.75)	node[right]	{$\mu_{1} c_{1}$};
		\draw[<->, thick]
			(3.75,3)	node[below]	{$\lambda_{2}$} --	(3.75,4.5)	node[above]	{$\mu_{2} c_{2}$};
	\end{tikzpicture}
	\caption{Limiting random walk diagram}\label{figlimitingrw}
\end{figure}
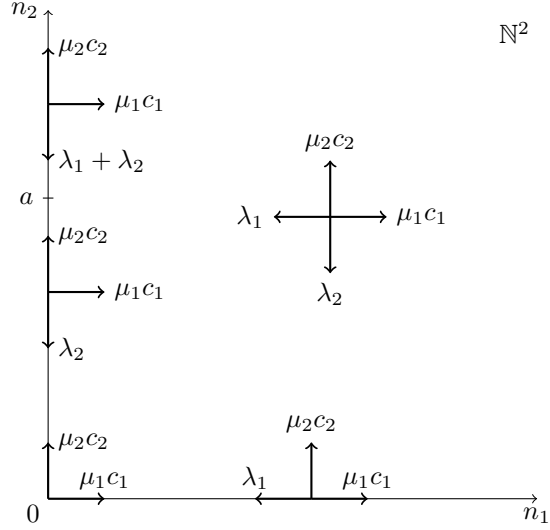

\section{Analysis of the limiting random walk}\label{rw}

We assume in this section that the random walk is ergodic. In other words, Condition~\eqref{stabcond} is satisfied.

\subsection{Functional equation}\label{funceq}

Let $p(n_1,n_2)$ denote the stationary probability of being in state $(n_1,n_2) \in \N^2$ in the stationary regime. It is easily checked that the balance equation reads for $n_1, n_2 \geq 0$
\begin{multline} \label{balance}
	\left( \lambda_1 + \lambda_2 + \mu_1 c_1 + \mu_2 c_2 - \lambda_2 \ind{n_2=0} - \lambda_1 \ind{n_1=0,0\leq n_2\leq a} \right)p(n_1,n_2) \\
	= \mu_1 c_1 p(n_1-1,n_2) + \mu_2 c_2 p(n_1,n_2-1)+ \lambda_1 p(n_1+1,n_2) \\ + \lambda_2 p(n_1,n_2+1)+ \lambda_1 p(0,n_2+1)\ind{n_1=0,n_2\geq a}.
 \end{multline}
 
Define for $x$ and $y$ in the unit disk $D = \{z \in \C: \abs{z} < 1\}$ the generating functions
\begin{align*}
	P(x,y) &= \sum_{n_1=0}^\infty \sum_{n_2=0}^\infty p(n_1,n_2) x^{n_1} y^{n_2},\\
	P_1(x) &= \sum_{n_1=0}^\infty p(n_1,0) x^{n_1}, \text{ and}\\
	P_2(y) &= \sum_{n_2=0}^\infty p(0,n_2) y^{n_2}.
\end{align*}
By definition, the function $P(x,y)$ is analytic in $D\times D$ and the functions $P_1(x)$ and $P_2(y)$ are analytic in $D$.

By multiplying Equation~\eqref{balance} by the term $x^{n_1}y^{n_2}$ and summing for $n_1$ and $n_2$ ranging from zero to infinity, we obtain the functional equation
\begin{equation}\label{eqfond}
	K(x,y)P(x,y)= \lambda_2 x(1-y)P_1(x)+ \lambda_1 (y-x)P_2(y) + \lambda_1 x(1-y)P_2^-(y),
\end{equation}
where the kernel $K(x,y)$ is defined by
\begin{equation}\label{defK}
	K(x,y) = \mu_1 c_1 x^2 y + \mu_2 c_2 x y^2 - (\lambda_1 + \lambda_2 + \mu_1 c_1 + \mu_2 c_2) x y + \lambda_1 y+ \lambda_2 x
\end{equation}
and the polynomial $P_2^-(y)$ by
$$
P_2^-(y) = \sum_{n_2=0}^{a} p(0,n_2) y^{n_2}.
$$

From the functional equation~\eqref{eqfond}, let us note that the generating function of the number of customers in the first queue is given by
$$
P(x,1) = \frac{\lambda_1}{\lambda_1 - \mu_1 c_1 x} P_2(1),
$$
from which we deduce that $P_2(1) = \P(n_1=0) = 1 - \frac{ \mu_1 c_1 }{ \lambda_1 }$. This is obvious since the first queue is an $M/M/1$ queue with input rate $ \mu_1 c_1 $ and service rate $ \lambda_1 $, independent of the second queue. See Section 6.7.1 of Robert~\cite{Robert} for a complete proof.

In addition, we have
$$
P(1,y) = \frac{ \lambda_2 P_1(1)- \lambda_1 (P_2(y)-P_2^-(y))}{ \lambda_2 - \mu_2 c_2 y}.
$$
The normalizing condition $P(1,1)=1$ yields
\begin{equation} \label{normalizing}
	 \lambda_1 \P(n_1=0,n_2\leq a)+ \lambda_2 \P(n_2=0) = \lambda_1 + \lambda_2 - \mu_1 c_1 - \mu_2 c_2 .
\end{equation}

The quantity $B_1 = \P(N_1=0,N_2\leq a)$ is the blocking probability of customers originally trying to access  the first DC and $B_2 = \P(n_2=0) $ is the blocking probability of customers trying to reach the second DC. The above relation is the global rate conservation equation of the system. The performance of the system is actually characterized by the blocking probabilities, given by the generating functions as 
\begin{equation}\label{lossgenfun}
	B = \left( P_2^{-}(1) , P_1(1) \right).
\end{equation}
In the following, we intend to compute the unknown generating functions $P_1(x)$, $P_2(y)$ and $P_2^-(y)$. 

\subsection{Zero pairs of the kernel}

The kernel $K(x,y)$ has already been studied in \cite{FayolleIas} in the framework of coupled DCs (DC). For fixed $y$, the kernel $K(x,y)$ defined by Equation~\eqref{defK} has two roots $X_0(y)$ and $X_1(y)$. By using the usual definition of the square root such that $\sqrt{a}>0$ for $a>0$, the solution which is null at the origin and denoted by $X_0(y)$, is defined and analytic in $\C\setminus ([y_1,y_2]\cup [y_3,y_4])$ where the real numbers $y_1$, $y_2$, $y_3$ and $y_4$ are such that $0<y_1<y_2<1<y_3<y_4$. The other solution $X_1(y)$ is meromorphic in $\C\setminus ([y_1,y_2]\cup [y_3,y_4])$ with a pole at 0. The function $X_0(y)$ is precisely defined by
$$
X_0(y) = \frac{-( \mu_2 c_2 y^2-( \lambda_1 + \lambda_2 + \mu_1 c_1 + \mu_2 c_2 )y+ \lambda_2 ) +{\sigma_1(y)}}{2 \mu_1 c_1 y},
$$
where $\sigma_1(y)$ is the analytic extension in $\C\setminus ([y_1,y_2]\cup [y_3,y_4])$ of the function defined in the neighborhood of 0 as $\sqrt{\Delta_1(y)}$ with
$$
\Delta_1(y) = ( \mu_2 c_2 y^2-( \lambda_1 + \lambda_2 + \mu_1 c_1 + \mu_2 c_2 )y+ \lambda_2 )^2-4 \mu_1 c_1 \lambda_1 y^2.
$$
The other solution $X_1(y) = \dfrac{ \lambda_1 }{ \mu_1 c_1 X_0(y)}$.

When $y$ crosses the segment $[y_1, y_2]$, $X_0(y)$ and $X_1(y)$ describe the circle $C_{r_1}$ with center 0 and radius $r_1=\sqrt{\frac{ \lambda_1 }{ \mu_1 c_1 }}>1$.

Similarly, for fixed $x$, the kernel $K(x,y)$ has two roots $Y_0(x)$ and $Y_1(x)$. The root $Y_0(x)$, which is null at the origin, is analytic in $\C\setminus ([x_1, x_2] \cup [x_3, x_4])$ where the real numbers $x_1$, $x_2$, $x_3$ and $x_4$ are such that with $0<x_1<x_2<1<x_3<x_4$ and is given by
$$
Y_0(x) = \frac{-( \mu_1 c_1 x^2-( \lambda_1 + \lambda_2 + \mu_1 c_1 + \mu_2 c_2 )x+ \lambda_1 ) + \sigma_2(x)} {2 \mu_2 c_2 x}
$$
where $\sigma_2(x)$ is the analytic extension in of the function defined in the neighborhood of 0 as $\sqrt{\Delta_2(x)}$ with
$$
\Delta_2(x) = (\mu_1 c_1 x^2-( \lambda_1 + \lambda_2 + \mu_1 c_1 + \mu_2 c_2 )x+ \lambda_1 )^2 - 4 \mu_2 c_2 \lambda_2 x^2.
$$
The other root $Y_1(x) = \dfrac{ \lambda_2 }{ \mu_2 c_2 Y_0(x)}$ and is meromorphic in $\C\setminus ([x_1,x_2]\cup [x_3,x_4])$ with a pole at the origin. 

When $x$ crosses the segment $[x_1, x_2]$, $Y_0(y)$ and $Y_1(y)$ describe the circle $C_{r_2}$ with center 0 and radius $r_2 = \sqrt{\frac{ \lambda_2 }{ \mu_2 c_2}}$.

\section{Boundary value problems}\label{bvp}

To solve the boundary value problems encountered in the following, let us recall that if we search for a function $P(z)$ analytic in the disk $D_r = \left\{z \in \C : \abs{z} < r \right\}$ for some $r > 0$, such that for $z \in C_r =\left\{z \in \C : \abs{z}= r \right\}$, $P(z)$ satisfies
$$
\Re(i g(z) P(z)) = \Re(ih(z))
$$
for some functions $g(z)$ and $h(z)$ analytic in a neighborhood of $C_r$ and $g(z)$ does not cancel in this neighborhood, then the function 
\begin{align*}
	\tilde{P}(z) =
	\begin{dcases}
		P(z) & z \in D_r \\ 
		\overline{P\left(1/z \right)} & z\in \C \setminus \overline{D_r}
	\end{dcases}
\end{align*}
is solution to the following Riemann-Hilbert problem: The function $\tilde{P}(z)$ is sectionally analytic with respect to the circle $C_r$ and verifies for $z\in C_r$
$$
	g(z)\tilde{P}^i(z) -\overline{g(z)}\tilde{P}^e(z) = 2i\Im(h(z)),
$$
where $\tilde{P}^i(z)$ (resp. $\tilde{P}^e(z)$) is the interior (resp. exterior) limit of the function $\tilde{P}(z)$ at the circle $C_r$, and $\overline{D_r} = D_r \cup C_r$.

From \cite{Lions}, the solution to this Riemann-Hilbert problem when it exists is given for $z \in \C \setminus C_r$ by
$$
\tilde{P}(z) = \frac{\phi(z)}{\pi}\int_{C_r} \frac{\Im(h(\xi))}{g(\xi)\phi^i(\xi)} \frac{\diff \xi}{\xi -z} + \phi(z) \mathcal{P}(z),
$$
where $\mathcal{P}(z)$ is a polynomial and $\phi^i(z)$ (resp. $\phi^e(z)$) is the interior (resp. exterior) limit at the circle $C_r$ of the solution $\phi(z)$ to the following homogeneous Riemann-Hilbert problem: The function $\phi(z)$ is sectionally analytic with respect to the circle $C_r$ and, for $z\in C_r$,
$$
\phi^i(z)= \alpha(z) \phi^e(z),
$$
where
$$
\alpha(z) = \frac{\overline{g(z)}}{g(z)}.
$$

The function $\phi(z)$ is given by
\begin{align*}
	\phi(z) =
	\begin{dcases}
		\exp\left(\frac{1}{2\pi i} \int_{C_r} {\log\left(\frac{\alpha(\xi)}{\xi^\kappa} \right)} \frac{\diff\xi}{\xi-z} \right) & z \in D_r \\ 
		\frac{1}{z^\kappa} \exp\left(\frac{1}{2\pi i} \int_{C_r} {\log\left(\frac{\alpha(\xi)}{\xi^\kappa} \right)} \frac{\diff \xi}{\xi-z}  \right) & z\in \C \setminus \overline{D_r},
	\end{dcases}
\end{align*}
where $\kappa$ is the index of the Riemann-Hilbert problem defined by 
$$
\kappa = \frac{1}{2\pi} \; \var{z \in C_r} \arg \alpha(z).
$$

The existence and the uniqueness of the solution depends on the value of the index $\kappa$. When $\kappa =0$, the solution exists and is unique with $\mathcal{P}(z) = P(0)$.

\subsection{Function $P_1(x)$}\label{FunP1}
Let us first establish a relation between the functions $P_1(x)$ and $P_2^-(y)$. For this purpose, let us define for $x \in C_{r_1}$
\begin{equation}\label{defalpha1}
	\alpha_1(x) = \frac{\bar{x}(Y_0(x)-x)}{x(Y_0(x)-\bar{x})} = \frac{ \lambda_1 (Y_0(x)-x)}{x( \mu_1 c_1 x Y_0(x)- \lambda_1 )}
\end{equation}
and let us consider the following homogeneous Riemann-Hilbert problem: The function $\phi_1(x)$ is sectionally analytic with respect to the circle $C_{r_1}$, and for some $x \in C_{r_1}$
\begin{equation}\label{RHP1hom}
	\phi_1^i(x)= \alpha_1(x) \phi_1^e(x),
\end{equation}
where $\phi_1^i(x)$ (resp. $\phi_1^e(x)$) is the interior (resp. exterior) limit at the circle $C_{r_1}$.

\begin{lemma}
The function $\phi_1(x)$ is given for some $x \in D_{r_1}$, by
\begin{equation}\label{defphi1}
	\phi_1(x) = \exp \left( \frac{x}{\pi} \int_{y_1}^{y_2} \frac{\left( \lambda_2 - \mu_2 c_2 y^2 \right) \Theta_1(y)}{y K(x,y)} \diff y \right),
\end{equation}
where
\begin{equation}\label{defTheta1}
	\Theta_1(y) = \arctan \left(\frac{\sqrt{-\Delta_1(y)}}{ \mu_2 c_2 y^2 -( \lambda_1 + \lambda_2 + \mu_1 c_1 + \mu_2 c_2 )y +2 \lambda_1 + \lambda_2 } \right).
\end{equation}
\end{lemma}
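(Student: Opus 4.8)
The plan is to solve the homogeneous Riemann--Hilbert problem \eqref{RHP1hom} explicitly using the general formula for $\phi(z)$ recalled at the start of Section~\ref{bvp}, specialized to $C_r = C_{r_1}$, $\alpha = \alpha_1$. First I would compute the index $\kappa_1 = \frac{1}{2\pi}\var{x\in C_{r_1}}\arg\alpha_1(x)$ and check that it is zero, so that the solution is unique and no polynomial factor appears; this amounts to tracking how $Y_0(x)$ winds relative to $x$ as $x$ traverses $C_{r_1}$, using the description from Section~\ref{rw} of how $X_0,X_1$ (equivalently the behaviour of the kernel) degenerate when $x$ crosses $[x_1,x_2]$. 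Granting $\kappa_1 = 0$, the formula gives
$$
\phi_1(x) = \exp\left(\frac{1}{2\pi i}\int_{C_{r_1}}\log\alpha_1(\xi)\,\frac{\diff\xi}{\xi - x}\right),\qquad x\in D_{r_1}.
$$

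The substance of the proof is then to reduce this contour integral over the circle $C_{r_1}$ to the real integral over $[y_1,y_2]$ appearing in \eqref{defphi1}. The key geometric fact, quoted in Section~\ref{rw}, is that as $y$ runs over the branch cut $[y_1,y_2]$ the two roots $X_0(y),X_1(y)=\bar X_0(y)$ run over the circle $C_{r_1}$ (since $|X_0(y)|=r_1$ there and $X_0 X_1 = \lambda_1/(\mu_1 c_1)=r_1^2$). So I would parametrize $\xi\in C_{r_1}$ by $\xi = X_0(y)$ with $y\in[y_1,y_2]$, traversed twice (once for each branch), and change variables $\diff\xi \mapsto X_0'(y)\,\diff y$. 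On $C_{r_1}$ one has $\bar\xi = r_1^2/\xi = \lambda_1/(\mu_1 c_1 \xi)$, which is exactly $X_1(y)$; this lets me rewrite $\alpha_1(\xi)$ in \eqref{defalpha1} purely in terms of $y$, and one checks that $\log\alpha_1$ becomes $2i\,\Theta_1(y)$ with $\Theta_1$ as in \eqref{defTheta1} — the $\arctan$ arising because $\alpha_1$ is a ratio of a complex number and its conjugate, i.e. a pure phase $e^{2i\arg(\cdot)}$, and the explicit argument of $Y_0(x)-x$ on the relevant arc works out to the stated expression involving $\sqrt{-\Delta_1(y)}$ and the real part $\mu_2 c_2 y^2 - (\lambda_1+\lambda_2+\mu_1 c_1+\mu_2 c_2)y + 2\lambda_1+\lambda_2$. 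Finally I would combine the two passages over the cut, use a partial-fraction / Plemelj-type identity to collapse $\frac{1}{\xi - x}$ together with the two values of $X_0(y)$ over the Jacobian into the rational kernel $\frac{(\lambda_2-\mu_2 c_2 y^2)}{y K(x,y)}$ — here one uses that $X_0(y),X_1(y)$ are precisely the roots of $K(\cdot,y)$, so $K(x,y) = \mu_1 c_1 y (x - X_0(y))(x - X_1(y))$ — and pull the resulting factor $x$ out front to land on \eqref{defphi1}.

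The main obstacle I expect is the bookkeeping in the change of variables: correctly orienting the two arcs of $C_{r_1}$ relative to the interval $[y_1,y_2]$, getting the signs of $\sqrt{-\Delta_1(y)}$ consistent on the two sides of the cut (recall $\sigma_1$ is defined as an analytic continuation, so on the cut $\sigma_1(y)=\pm i\sqrt{-\Delta_1(y)}$ with a definite sign), and verifying that the branch of $\arctan$ in \eqref{defTheta1} is the one making $\log\alpha_1$ continuous along the arc. A secondary point requiring care is justifying that the singular integral in the exponent is well defined — i.e. that $\alpha_1$ does not vanish or blow up on $C_{r_1}$ and that the endpoint behaviour at $y_1,y_2$ (where $\Delta_1$ vanishes, so $\Theta_1\to 0$) is integrable — which is what legitimizes applying the Sokhotski--Plemelj formula and the result of \cite{Lions} in the first place. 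Once the index computation and these analyticity checks are in place, the rest is a (somewhat lengthy but routine) algebraic simplification of the integrand.
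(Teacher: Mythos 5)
Your plan is correct and follows essentially the same route as the paper: show the index $\kappa_1$ vanishes, write $\phi_1$ via the Cauchy-type integral of $\log\alpha_1$ over $C_{r_1}$, parametrize the circle by the two branches $X_0(y\pm 0i)$ for $y\in[y_1,y_2]$ so that $\alpha_1$ becomes the pure phase $e^{\mp 2i\Theta_1(y)}$, and collapse the two contributions using the Jacobian of $X_0$ together with $(X_0(y+0i)-x)(X_0(y-0i)-x)=K(x,y)/(\mu_1 c_1 y)$. The only point where the paper is slightly more direct is the index computation, done by noting $\Re\left(\bar{x}Y_0(x)-\lambda_1/(\mu_1 c_1)\right)<0$ on $C_{r_1}$ rather than by a winding argument, but this is a minor variation.
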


\begin{proof}
The index of the Riemann-Hilbert problem~\eqref{RHP1hom} is given by
$$
\kappa_1 = \frac{1}{2\pi} \; \var{x \in C_{r_1}} \arg \; {\alpha_1(x)}
$$
where $\alpha_1(x)$ is defined by Equation~\eqref{defalpha1}. For $x \in C_{r_1}$, we have
$$
\Re\left(\bar{x}Y_0(x)-\frac{ \lambda_1 }{ \mu_1 c_1 }\right) <0,
$$
since $Y_0(x)\in [y_1,y_2]$ and $\Re(\bar{x})\leq r_1$. Hence, $\kappa_1=0$.

The solution to the Riemann-Hilbert problem is then given for $x \in D_{r_1}$ by
$$
\phi_1(x) = \exp\left(\frac{1}{2\pi i}\int_{C_{r_1}} \frac{\log \alpha_1(z)}{z-x} \diff z \right).
$$

For $x \in C_{r_1}$, simple manipulations show that for $x = X_0(y + 0i)$ for $y \in [y_1,y_2]$
$$
\alpha_1(x) = e^{-2i\Theta_1(y)},
$$
where $\Theta_1(y)$ is defined by Equation~\eqref{defTheta1}, since
$$
x = \frac{-( \mu_2 c_2 y^2- ( \lambda_1 + \lambda_2 + \mu_1 c_1 + \mu_2 c_2 )y + \lambda_2 ) - i \sqrt{\Delta_1(y)}}{2 \mu_1 c_1 y}.
$$

It follows that for $x\in D_{r_1}$
\begin{align*}
	\frac{1}{2\pi i}\int_{C_{r_1}} \frac{\log \alpha_1(z)}{z-x} \diff z
	&=\frac{-1}{\pi}\int_{y_1}^{y_2} \Theta_1(y) \left(\frac{\frac{\diff X_0(y+0i)}{\diff y}}{X_0(y+0i)-x} + \frac{\frac{\diff X_0(y-0i)}{\diff y}}{X_0(y-0i)-x} \right) \diff y \\
	&= \frac{1}{\pi}\int_{y_1}^{y_2} \frac{x( \lambda_2 - \mu_2 c_2 y^2) \Theta_1(y) }{y K(x,y)} \diff y,
\end{align*}
where we have used the fact that
$$
\frac{\diff X_0(y+0i)}{\diff y} = \frac{X_0(y + 0i)\left(- \mu_2 c_2 y+ \frac{ \lambda_2 }{y} \right)}{-i \sqrt{-\Delta_1(y)}}
$$
and
$$
(X_0(y+0i)-x)(X_0(y - 0i) - x) = \frac{K(x,y)}{ \mu_1 c_1 y}.
$$

Equation~\eqref{defphi1} then follows.
\end{proof}

\begin{corollary}
The interior limit of the function $\phi_1(x)$ at the circle $C_{r_1}$ is given for $x=X_0(y+0i)$ with $y\in[y_1,y_2]$ by
$$
\phi^i_1(x) = \exp \left(-i\Theta_1(y) + \Phi_1(y)\right),
$$ where
\begin{equation}\label{defPhi1}
	\Phi_1(y) = \frac{1}{\pi}\oint_{y_1}^{y_2} \frac{y( \lambda_2 - \mu_2 c_2 \xi^2) \Theta_1(\xi)}{\xi (\mu_2 c_2 y \xi - \lambda_2)} \frac{\diff \xi}{\xi-y},
\end{equation}
and the symbol $\oint$ denotes the Cauchy integral \cite{Lions}.
\end{corollary}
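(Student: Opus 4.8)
The plan is to obtain $\phi_1^i$ as the boundary value of the analytic function $\phi_1$ furnished by the previous lemma, by applying the Sokhotski--Plemelj formula to the Cauchy integral inside the exponential. Recall that for $x\in D_{r_1}$,
$$
\phi_1(x)=\exp\left(\frac{1}{2\pi i}\int_{C_{r_1}}\frac{\log\alpha_1(z)}{z-x}\,\diff z\right),
$$
and that $\phi_1$ is analytic in $D_{r_1}$, so that for $x_0\in C_{r_1}$ the interior limit $\phi_1^i(x_0)$ is simply $\lim_{x\to x_0,\,x\in D_{r_1}}\phi_1(x)$. Since $z\mapsto\log\alpha_1(z)$ is (H\"older) continuous on $C_{r_1}$, Sokhotski--Plemelj gives
$$
\phi_1^i(x_0)=\exp\left(\tfrac12\log\alpha_1(x_0)+\frac{1}{2\pi i}\oint_{C_{r_1}}\frac{\log\alpha_1(z)}{z-x_0}\,\diff z\right),
$$
the second integral being taken in the sense of the Cauchy principal value (the $+\tfrac12$ half-residue term being the one attached to the \emph{interior} limit when $C_{r_1}$ is positively oriented).

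Next I would specialize $x_0=X_0(y+0i)$ with $y\in[y_1,y_2]$. The half-residue term is immediate: by the identity $\alpha_1(X_0(y+0i))=e^{-2i\Theta_1(y)}$ established in the proof of the previous lemma, $\tfrac12\log\alpha_1(x_0)=-i\Theta_1(y)$. For the principal value term I would re-run verbatim the change of variables $z=X_0(\xi\pm 0i)$, $\xi\in[y_1,y_2]$, already used in the lemma; because this map carries $C_{r_1}$ onto the segment $[y_1,y_2]$ described twice and sends the singularity $z=x_0$ to $\xi=y$, it converts the principal value on the circle into a principal value on $[y_1,y_2]$ around $\xi=y$, producing, exactly as in the non-singular computation of the lemma,
$$
\frac{1}{2\pi i}\oint_{C_{r_1}}\frac{\log\alpha_1(z)}{z-x_0}\,\diff z=\frac{1}{\pi}\oint_{y_1}^{y_2}\frac{x_0\left(\lambda_2-\mu_2 c_2\xi^2\right)\Theta_1(\xi)}{\xi\,K(x_0,\xi)}\,\diff\xi.
$$

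It then remains to rewrite $K(x_0,\xi)$. Since $x_0=X_0(y+0i)$, the pair $(x_0,y)$ annihilates the kernel, so $y$ is one of the two roots of the quadratic $\xi\mapsto K(x_0,\xi)$; by \eqref{defK} the product of these roots equals $\lambda_2/(\mu_2 c_2)$, hence the other root is $\lambda_2/(\mu_2 c_2 y)$, and
$$
K(x_0,\xi)=\mu_2 c_2 x_0\left(\xi-y\right)\left(\xi-\frac{\lambda_2}{\mu_2 c_2 y}\right)=\frac{x_0\left(\xi-y\right)\left(\mu_2 c_2 y\xi-\lambda_2\right)}{y}.
$$
Substituting this into the integral above cancels the factor $x_0$ and produces precisely $\Phi_1(y)$ as defined in \eqref{defPhi1}. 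Collecting the two contributions yields $\phi_1^i(x)=\exp\left(-i\Theta_1(y)+\Phi_1(y)\right)$.

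The point I expect to require the most care is the legitimacy of this boundary passage: one must verify that $\log\alpha_1$, equivalently $\Theta_1$, is H\"older continuous along $C_{r_1}$, in particular that it behaves well near the ramification points $y_1$ and $y_2$ where $\Delta_1$ vanishes, so that Sokhotski--Plemelj applies; and one must keep track of the orientation with which $\xi\mapsto X_0(\xi+0i)$ and $\xi\mapsto X_0(\xi-0i)$ traverse $C_{r_1}$, in order to be sure the half-residue enters with the sign appropriate to the interior limit and that the change of variables commutes with the principal value. Granting these routine verifications, which are exactly of the type already performed in the proof of the preceding lemma, the identity follows.
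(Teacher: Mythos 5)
Your proposal is correct and follows essentially the same route as the paper: apply the Sokhotski--Plemelj formula to the Cauchy integral in the exponent, identify the half-residue term with $-i\Theta_1(y)$ via $\alpha_1(X_0(y+0i))=e^{-2i\Theta_1(y)}$, and convert the principal value on $C_{r_1}$ to one on $[y_1,y_2]$ by the same change of variables, the factorization $K(x,\xi)=x(\xi-y)(\mu_2 c_2 y\xi-\lambda_2)/y$ then giving exactly $\Phi_1(y)$. Your explicit kernel factorization (and the reality of $\Phi_1(y)$ it entails for $x=X_0(y\pm 0i)$) replaces the paper's appeal to the invariance of $x/K(x,y)$ under $x\mapsto \lambda_1/(\mu_1 c_1 x)$, but this is the same argument in substance.
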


By using Plemelj formula, we have for $x = X_0(y + 0i)$ with $y \in [y_1,y_2]$
\begin{align*}
	\phi^i_1(x)	&= \exp\left( \frac{\log \alpha_1(x)}{2} + \frac{1}{2 \pi i}\int_{C_{r_1}}\frac{\log\alpha_1(z)}{z-x} \diff z \right) \notag \\
				&= \exp\left(-i \Theta_1(y) + \frac{1}{\pi} \oint_{y_1}^{y_2} \frac{x( \lambda_2 - \mu_2 c_2 y^2) \Theta_1(\xi) }{y K(x,y)} \diff y \right). 
\end{align*}

By using the fact that for $x' = \frac{ \lambda_1 }{ \mu_1 c_1 x}$,
$$
\frac{x'}{K(x',y)} = \frac{x}{K(x,y)},
$$
we deduce that for $x = X_0(y \pm 0i)$ the Cauchy integral
$$
\frac{1}{\pi} \oint_{y_1}^{y_2} \frac{x( \mu_2 c_2 \xi^2 - \lambda_2 )\Theta_1(\xi) }{\xi K(x,\xi)} \diff \xi
$$
is real and equal to
$$
\Phi_1(y) = \frac{1}{\pi} \oint_{y_1}^{y_2} \frac{y( \lambda_2 - \mu_2 c_2 \xi^2) \Theta_1(\xi)}{\xi(\xi-y)( \mu_2 c_2 y \xi- \lambda_2 )} \diff \xi.
$$
It follows that for $x=X_0(y \pm 0i)$
$$
\phi^{i}_{1}(x) = \exp\left(\Phi_1(y) -i\Theta_1(y)\right).
$$

The results above allow us to establish a relation between functions $P_1(x)$ and $P_2^{-}(y)$.

\begin{proposition}
The functions $P_1(x)$ and $P^-_2(y)$ are such that for $x \in D_{r_1}$ 
\begin{multline} \label{P1x}
	P_1(x) = \phi_1(x) P(0,0) \\
	+ \frac{x \lambda_1 \phi_1(x)}{ \lambda_2 \pi} \int_{y_1}^{y_2} \frac{( \lambda_2 - \mu_2 c_2 y^2)P_2^-(y)}{y K(x,y)} \sin(\Theta_1(y))e^{-\Phi_1(y)} \diff y,
\end{multline}
where the functions $\phi_1(x)$ and $\Phi_1(y)$ are defined by Equations~\eqref{defphi1} and ~\eqref{defPhi1}, respectively.
\end{proposition}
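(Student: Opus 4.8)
The plan is to feed the kernel's zero pairs into the functional equation~\eqref{eqfond}. Whenever $K(x,y)=0$ its left-hand side vanishes, so $\lambda_2 x(1-y)P_1(x)+\lambda_1(y-x)P_2(y)+\lambda_1 x(1-y)P_2^-(y)=0$. First I would take $y$ on the slit $[y_1,y_2]$, where the two $x$-roots of the kernel are the conjugate points $x=X_0(y+0i)$ and $\bar x=X_0(y-0i)$ of the circle $C_{r_1}$, with $X_0(y+0i)\,X_0(y-0i)=\lambda_1/(\mu_1c_1)=r_1^2$. Writing that identity at both roots — using that $P_2(y)$ and $P_2^-(y)$ are real for $y\in[y_1,y_2]$ while $P_1(\bar x)=\overline{P_1(x)}$ — and eliminating $P_2(y)$ between the two equations, after dividing out the common factor $1-y$, produces the scalar boundary condition on $C_{r_1}$
$$
\lambda_2\,\Im\bigl(x\,(Y_0(x)-\bar x)\,P_1(x)\bigr)=-\lambda_1\,Y_0(x)\,\Im(x)\,P_2^-(Y_0(x)),\qquad x\in C_{r_1}.
$$
This is of the form $\Re(ig_1(x)P_1(x))=\Re(ih_1(x))$ recalled at the start of Section~\ref{bvp}, with $g_1(x)=\lambda_2 x(Y_0(x)-\bar x)$ — so that $\overline{g_1(x)}/g_1(x)$ is exactly the coefficient $\alpha_1(x)$ of~\eqref{defalpha1} — and with $\Im(h_1(\xi))$, $\xi\in C_{r_1}$, equal to the right member above; note that only $\Im h_1$ on $C_{r_1}$, and not $h_1$ itself, enters the solution formula.

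Next I would invoke the Riemann--Hilbert machinery recalled in Section~\ref{bvp}. By the Lemma the index of the homogeneous problem is $\kappa_1=0$, hence the problem has a unique solution and the free polynomial reduces to the constant $P_1(0)=p(0,0)=P(0,0)$. The solution formula then gives, for $x\in D_{r_1}$,
$$
P_1(x)=\phi_1(x)\,P(0,0)+\frac{\phi_1(x)}{\pi}\int_{C_{r_1}}\frac{\Im(h_1(\xi))}{g_1(\xi)\,\phi_1^i(\xi)}\,\frac{\diff\xi}{\xi-x},
$$
with $\phi_1$ the homogeneous solution~\eqref{defphi1}; in particular $P_1$ extends analytically past $D$ to $D_{r_1}$. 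It then remains to collapse the contour integral over $C_{r_1}$ into the real integral over $[y_1,y_2]$ in~\eqref{P1x}.

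To do so I would parametrize $\xi=X_0(y\pm0i)$, $y\in[y_1,y_2]$, exactly as in the proof of the Lemma, and substitute the identities recorded there and in the Corollary: $\alpha_1(X_0(y\pm0i))=e^{\mp2i\Theta_1(y)}$, $\phi_1^i(X_0(y\pm0i))=e^{\Phi_1(y)\mp i\Theta_1(y)}$, the formula for $\tfrac{\diff}{\diff y}X_0(y\pm0i)$ in terms of $\sqrt{-\Delta_1(y)}$, and the product relation $\bigl(X_0(y+0i)-x\bigr)\bigl(X_0(y-0i)-x\bigr)=K(x,y)/(\mu_1c_1 y)$. Two facts make the reduction work. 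First, $g_1(\xi)\,\phi_1^i(\xi)$ takes the same real value $\lambda_2 r_1\,|y-X_0(y+0i)|\,e^{\Phi_1(y)}$ on both banks of the slit, because the factor $e^{\pm i\Theta_1}$ carried by $g_1$ cancels the $e^{\mp i\Theta_1}$ carried by $\phi_1^i$. Second, $\Im(h_1(\xi))$ is odd under $\xi\mapsto\bar\xi$, so that together with the reversal of orientation the two bank contributions add, yielding the combination $\tfrac{\diff X_0(y+0i)/\diff y}{X_0(y+0i)-x}+\tfrac{\diff X_0(y-0i)/\diff y}{X_0(y-0i)-x}$, which — as already computed in the Lemma — equals $-x(\lambda_2-\mu_2c_2y^2)/(yK(x,y))$. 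Collecting the surviving algebraic factors and using the elementary relation (a direct computation from the definitions of $\Theta_1$, $\Delta_1$ and the stability condition, after recognizing the factorization $(y-1)\bigl((\mu_1c_1+\mu_2c_2)y-\lambda_1-\lambda_2\bigr)$) that $\sqrt{-\Delta_1(y)}=2\mu_1c_1\,r_1\,|y-X_0(y+0i)|\,\sin\Theta_1(y)$ turns the integrand into $\tfrac{(\lambda_2-\mu_2c_2y^2)P_2^-(y)}{yK(x,y)}\sin(\Theta_1(y))\,e^{-\Phi_1(y)}$ and the constants into the prefactor $\tfrac{x\lambda_1\phi_1(x)}{\lambda_2\pi}$, which is precisely~\eqref{P1x}.

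The hard part is this last change of variables: one must keep consistent track of the branch of $\sqrt{\Delta_1}$ — equivalently the sign of $\Im X_0(y\pm0i)$ — on each bank of the slit, and of which boundary value of $\phi_1$ is the interior one, so that all the signs appearing in the $e^{\pm i\Theta_1}$ factors, in $\Im(h_1)$, and in the orientation reversal line up; a single misidentification turns $\sin\Theta_1$ into $\cos\Theta_1$ or reverses the overall sign. One should also check along the way that $g_1$ and $\phi_1^i$ do not vanish on $C_{r_1}$, so that the integrand is well defined — this uses the strict inequality already invoked in the Lemma to prove $\kappa_1=0$. The elimination of $P_2$ in the first step and the appeal to the Riemann--Hilbert formula in the second are, by contrast, essentially forced once the boundary condition has been written down.
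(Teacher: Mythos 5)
Your proof follows the paper's argument essentially step for step: substitute the zero pair $x=X_0(y\pm 0i)$, $y\in[y_1,y_2]$, into the functional equation~\eqref{eqfond}, use the realness of $P_2(y)$ and $P_2^-(y)$ on the slit to eliminate $P_2$ and obtain a Riemann--Hilbert condition on $C_{r_1}$ whose coefficient is exactly $\alpha_1$ of~\eqref{defalpha1}, apply the index-zero solution formula with $\phi_1$ and the constant $P(0,0)$, and collapse the contour integral onto $[y_1,y_2]$ using the same bank-by-bank computation as in the Lemma. The only cosmetic difference is that you eliminate $P_2(y)$ by writing the relation at the two conjugate roots and subtracting (with a $g_1$ differing from the paper's by a real factor), whereas the paper simply takes the imaginary part of the relation at $x=X_0(y+0i)$; the resulting boundary condition and formula~\eqref{P1x} are identical.
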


\begin{proof}
For $x=X_0(y)$, we have from Equation~\eqref{eqfond}
\begin{align*}
	\frac{ \lambda_2 X_0(y)}{y-X_0(y)}P_1(x) = - \frac{X_0(y) \lambda_1 P^-_2(y)}{y-X_0(y)} -\frac{ \lambda_1 P_2(y)}{1-y},
\end{align*}

The above equation implies that for some $y \in [y_1,y_2]$, and hence $x\in C_{r_1}$, 
\begin{equation}\label{RHP1}
	\Re\left( i \lambda_2 \frac{x}{Y_0(x)-x} P_1(x) \right) = \Re\left(i \frac{x}{x-Y_0(x)} \lambda_1 P^-_2(Y_0(x)) \right),
\end{equation}
where we have used the fact that if $x=X_0(y+0i)$ with $y \in [y_1,y_2]$ then $y=Y_0(x)$.

By using the results recalled in Section~\ref{FunP1}, the solution to the Riemann-Hilbert problem~\eqref{RHP1} then reads for $x \in D_{r_1}$
\begin{align*}
	P_1(x) = \frac{\phi_1(x)}{ \lambda_2 \pi} \int_{C_{r_1}} \frac{\Im(h_1(z))}{g_1(z) \phi_1^i(z)} \frac{\diff z}{z-x} + \phi_1(x) P(0,0),
\end{align*}
where 
\begin{align*}
	h_1(x) &= \frac{x}{x-Y_0(x)} \lambda_1 P^-_2(Y_0(x)),\\
	g_1(x) &= \frac{x}{Y_0(x)-x},
\end{align*}
and the function $\phi_1(x)$ is solution to the homogeneous Riemann-Hilbert problem~\eqref{RHP1hom}.

By using the fact that $h_1(z) = -g_1(z) \lambda_1 P_2^-(Y_0(z))$, we deduce for $z=X_0(y+0i)$
$$
\frac{\Im(h_1(z))}{g_1(z) \phi_1^i(z)} = -\sin(\Theta_1(y))e^{-\Phi_1(y)} \lambda_1 P_2^-(y).
$$
Hence,
$$
\frac{1}{ \lambda_2 \pi}\int_{C_{r_1}} \frac{\Im(h_1(z))}{g_1(z) \phi_1^i(z)}\frac{\diff z}{z-x} = \frac{x \lambda_1 }{ \lambda_2 \pi} \int_{y_1}^{y_2} \frac{( \lambda_2 - \mu_2 c_2 y^2)P_2^-(y)}{y K(x,y)} \sin(\Theta_1(y))e^{-\Phi_1(y)} {\diff y}
$$
and Equation~\eqref{P1x} follows.
\end{proof}

\subsection{Function $P_2(y)$}\label{FunP2}
We now establish a relation between the function $P_2(y)$, the function $P_1(x)$ and the polynomial $P_2^-(y)$. We use the same technique as for function $P_1(x)$. 

\begin{proposition}
The function $P_2(y)$ is related to polynomial $P^-_2(y)$ as
\begin{multline}\label{P2}
	P_2(y) = \frac{y \lambda_2 }{2\pi \lambda_1 } \int_{x_1}^{x_2}\frac{P_1(x)( \lambda_1 - \mu_1 c_1 x^2) \sqrt{-\Delta_2(x)} }{x ( \lambda_1 + \lambda_2 - ( \mu_1 c_1 + \mu_2 c_2 )x)K(x,y)} {\diff x} \\
			+ \frac{y}{2 \pi i} \int_{C_{r_2}} \frac{(z-1)( \lambda_2 - \mu_2 c_2 z^2)X_0(z)^2P_2^-(z)}{z^2(z-X_0(z))K(X_0(z),y)} {\diff z} + P(0,0).
\end{multline}
\end{proposition}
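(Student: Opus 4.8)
The plan is to repeat, on the circle $C_{r_2}$, the kind of argument carried out for $P_1(x)$ in Section~\ref{FunP1}, but now regarding both the polynomial $P_2^-$ and the function $P_1$ --- the latter already expressed through $P_2^-$ by~\eqref{P1x} --- as known data. First I would restrict the functional equation~\eqref{eqfond} to the algebraic curve $\{K(x,y)=0\}$ by substituting $x=X_0(y)$; the left-hand side vanishes and one is left with
$$
0=\lambda_2 X_0(y)(1-y)P_1(X_0(y))+\lambda_1\bigl(y-X_0(y)\bigr)P_2(y)+\lambda_1 X_0(y)(1-y)P_2^-(y).
$$
Dividing by $y-X_0(y)$ isolates $P_2(y)$ with the \emph{constant} coefficient $\lambda_1$:
$$
\lambda_1 P_2(y)=-\frac{X_0(y)(1-y)}{y-X_0(y)}\bigl(\lambda_2 P_1(X_0(y))+\lambda_1 P_2^-(y)\bigr)=:h_2(y).
$$

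Next I would let $y$ range over $C_{r_2}$. As $x$ traverses $[x_1,x_2]$ the points $Y_0(x\pm 0i)$ sweep $C_{r_2}$, and conversely for $y\in C_{r_2}$ the matching branch $x=X_0(y)$ lies in $[x_1,x_2]$, so in particular $P_1(X_0(y))$ is a real number there. Applying $z\mapsto\Re(iz)$ to the last identity (which acts on its left-hand side by no more than the real constant $\lambda_1$) turns it into the Riemann--Hilbert condition
$$
\Re\bigl(i\,\lambda_1 P_2(y)\bigr)=\Re\bigl(i\,h_2(y)\bigr),\qquad y\in C_{r_2}.
$$
Since the multiplier is the real constant $\lambda_1$, the glueing coefficient is $\alpha_2\equiv 1$, the index vanishes, $\kappa_2=0$, and the homogeneous problem has the trivial solution $\phi_2\equiv 1$. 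The solution formula recalled at the beginning of Section~\ref{bvp} then applies with $\phi\equiv 1$ and $\mathcal P\equiv P_2(0)=P(0,0)$, giving
$$
P_2(y)=\frac{1}{\pi\lambda_1}\int_{C_{r_2}}\frac{\Im h_2(\xi)}{\xi-y}\,\diff\xi+P(0,0)
$$
on the disk where $P_2$ is analytic, the stated formula then following by analytic continuation.

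It remains to unfold the integral along the splitting $h_2=h_2^{(1)}+h_2^{(2)}$ induced by the $\lambda_2 P_1$ and the $\lambda_1 P_2^-$ summands. In $h_2^{(1)}$ the value $P_1(X_0(\xi))$ is real on $C_{r_2}$, so only the imaginary part of the rational prefactor survives; I would change the integration variable from $\xi\in C_{r_2}$ to $x\in[x_1,x_2]$ via $\xi=Y_0(x\pm0i)$ and combine the two banks of the cut, using $Y_0(x-0i)=\overline{Y_0(x+0i)}$, $\Im Y_0(x+0i)=\sqrt{-\Delta_2(x)}/(2\mu_2 c_2 x)$, the formula for $\diff Y_0(x+0i)/\diff x$ that is the $x\leftrightarrow y$ analogue of the one for $X_0$ in Section~\ref{FunP1} (it carries the factor $\lambda_1-\mu_1 c_1 x^2$), the identity $\bigl(Y_0(x+0i)-y\bigr)\bigl(Y_0(x-0i)-y\bigr)=K(x,y)/(\mu_2 c_2 x)$, and the elementary fact that $\Im\bigl(X_0(\xi)(1-\xi)/(\xi-X_0(\xi))\bigr)$ at $\xi=Y_0(x+0i)$ is a multiple of $\sqrt{-\Delta_2(x)}/(\lambda_1+\lambda_2-(\mu_1 c_1+\mu_2 c_2)x)$. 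These collapse the two contributions into the first integral in~\eqref{P2}. In $h_2^{(2)}$ the polynomial $P_2^-$ is complex-valued on $C_{r_2}$, so instead I would write $\Im h_2^{(2)}(\xi)=\frac{1}{2i}\bigl(h_2^{(2)}(\xi)-\overline{h_2^{(2)}(\xi)}\bigr)$ and use that $h_2^{(2)}$ takes real values on the real axis, hence $\overline{h_2^{(2)}(\xi)}=h_2^{(2)}(\bar\xi)=h_2^{(2)}(r_2^2/\xi)$ on $C_{r_2}$; the change of variable $\xi\mapsto r_2^2/\xi$ in the reflected term together with the factorisation $K(X_0(\xi),y)=\mu_2 c_2 X_0(\xi)(y-\xi)(y-r_2^2/\xi)$ (equivalently $\lambda_2-\mu_2 c_2 \xi y=-\xi K(X_0(\xi),y)/(X_0(\xi)(\xi-y))$) then recasts this piece as the contour integral over $C_{r_2}$ in the second term of~\eqref{P2}.

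The main obstacle is precisely this last unpacking: getting the change of variables for $h_2^{(1)}$ right so that the kernel collapses to $1/K(x,y)$ and all the rational factors appearing in~\eqref{P2} come out with the correct constants, and checking that the reflection manipulation for $h_2^{(2)}$ adds no spurious residue (at $\xi=0$, or at a possible fixed point of $X_0$ lying in $D_{r_2}$) beyond the displayed term. A preliminary point, needed before the solution formula of Section~\ref{bvp} may be invoked, is that $C_{r_2}$ and a neighbourhood of it stay clear of the branch cuts $[y_1,y_2]\cup[y_3,y_4]$ of $X_0$, so that $h_2$ is analytic near $C_{r_2}$.
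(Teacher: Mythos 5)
Your proposal follows the paper's proof essentially step for step: restricting the functional equation to the zero set of the kernel yields the Dirichlet problem $\Re\left(i\lambda_1 P_2(y)\right)=\Re\left(i h_2(y)\right)$ on $C_{r_2}$ with the same $h_2$, zero index, and the same Cauchy-integral solution with additive constant $P(0,0)$, after which the two summands of $h_2$ are unpacked exactly as in the paper (the $P_1$ piece converted to an integral over $[x_1,x_2]$ via $\xi=Y_0(x\pm 0i)$ and the derivative/product identities, the $P_2^-$ piece kept on $C_{r_2}$ using the real-coefficients reflection and the factorisation of $K(X_0(z),y)$). The computational details you flag as remaining obstacles are precisely the ones the paper carries out, so the approach is correct and not materially different.
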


\begin{proof}
From Equation~\eqref{eqfond}, we have for $y=Y_0(x)$
$$
 \lambda_2 \frac{1-Y_0(x)}{(Y_0(x)-x)} x P_1(x) + \lambda_1 P_2(Y_0(x)) + \lambda_2 \frac{1-Y_0(x)}{(Y_0(x)-x)} x P^-_2(Y_0(x)) = 0.
$$
When $x\in [x_1,x_2]$, $Y_0(x)\in C_{r_2}$ and it then follows that the function $P_2(y)$ satisfies for $y \in C_{r_2}$
$$
\Re\left(i \lambda_1 P_2(y) \right) = \Re\left(i h_2(y) \right),
$$
where
$$
h_2(y) = \lambda_2 \frac{(y-1)X_0(y)}{y-X_0(x)}P_1(X_0(y))+ \lambda_1 \frac{(y-1)X_0(y)}{y-X_0(y)}P_2^-(y).
$$
By using the results recalled in Section~\ref{FunP1}, the function $P_2(y)$ is given by
$$
P_2(y) = \frac{1}{ \lambda_1 \pi}\int_{C_{r_2}} {\Im(h_2(z))}\frac{\diff z}{z-y}+P(0,0).
$$
(Note that we have in the present case to deal with a Dirichlet problem.)

Simple computations show that for $y = Y_0(x+0i)$
$$
\Im\left( \lambda_2 \frac{(y-1)X_0(y)}{y-X_0(x)}P_1(X_0(y)) \right)=-\frac{ \lambda_2 P_1(x)}{2( \lambda_1 + \lambda_2 -( \mu_1 c_1 + \mu_2 c_2 )x)}\sqrt{-\Delta_2(x)}.
$$
In addition, by using the fact that the polynomial $P_2^-(y)$ is with real coefficients, we have
\begin{multline*}
	\Im\left( \lambda_1 \frac{y-1}{y-X_0(y)}X_0(y)P_2^-(y) \right)=\frac{ \lambda_1 }{2 \pi i} \left(\frac{y-1}{y-X_0(y)}X_0(y)P_2^-(y)
	\right. \\ \left.
	-\frac{\bar{y}-1}{\bar{y}-X_0(y)}X_0(y)P_2^-(\bar{y}) \right).
\end{multline*}

Since
$$
\frac{\diff Y_0(x+0i)}{\diff x}= \frac{Y_0(x+0i) \left(\frac{ \lambda_1 }{x} - \mu_1 c_1 x\right)}{-i \sqrt{-\Delta_2(x)}},
$$
we have
\begin{multline*}
	\int_{C_{r_2}} \Im\left( \lambda_2 \frac{(y-1)X_0(y)}{y-X_0(x)}P_1(X_0(y)) \right) \frac{\diff z}{z-y} = \\
	\int_{x_1}^{x_2}\frac{{ \lambda_2 y} P_1(x)( \lambda_1 - \mu_1 c_1 x^2) \sqrt{-\Delta_2(x)} }{2x( \lambda_1 + \lambda_2 -( \mu_1 c_1 + \mu_2 c_2 )x)K(x,y)} {\diff x}.
\end{multline*}
Moreover,
\begin{align*}
	&\frac{1}{\pi} \int_{C_{r_2}} \Im\left( \lambda_1 \frac{z-1}{z-X_0(z)}X_0(z)P_2^-(z) \right) \frac{\diff z}{z-y}\\
	&= \frac{y \lambda_1 }{2 \pi i} \int_{C_{r_2}} \frac{(z-1)( \lambda_2 - \mu_2 c_2 z^2)X_0(z)P_2^-(z)}{z(z-X_0(z))(z-y)( \lambda_2 - \mu_2 c_2 yz)}{\diff z} \\
	&= \frac{y \lambda_1 }{2 \pi i} \int_{C_{r_2}} \frac{(z-1)( \lambda_2 - \mu_2 c_2 z^2)X_0(z)^2P_2^-(z)}{z^2(z-X_0(z)) K(X_0(z),y)} {\diff z}.
\end{align*}
By assembling the two above relations, Equation~\eqref{P2} follows.
\end{proof}

\subsection{Determination of the polynomial $P_2^{-}(y)$}
We use the two previous results to establish a linear system satisfied by the coefficients of the polynomial $P_2^{-}(y)$. Let us first introduce some notation

Set
$$ x(\theta) = \frac{ \lambda_1 + \lambda_2 + \mu_1 c_1 + \mu_2 c_2 -2\sqrt{ \mu_2 c_2 \lambda_2 }\cos\theta - \sqrt{\delta_1(\theta)}}{2 \mu_1 c_1 }
$$
with
$$ \delta_1(\theta) = ( \lambda_1 + \lambda_2 + \mu_2 c_2 + \mu_1 c_1 -2\sqrt{ \mu_2 c_2 \lambda_2 }\cos\theta)^2-4 \mu_1 c_1 \lambda_1 ,
$$
so that
$$
\cos\theta = - \frac{\mu_1 c_1 x(\theta)^2 - ( \lambda_1 + \lambda_2 + \mu_1 c_1 + \mu_2 c_2 ) x(\theta)+ \lambda_1 }{2\sqrt{ \mu_2 c_2 \lambda_2 } x(\theta)}.
$$

Moreover, define the coefficients for $n,k = 0, \ldots, a$
\begin{align}
	\alpha_{n,k}^{(1)}	&= \frac{2 \mu_2 c_2 }{\pi^2}\int_0^\pi \frac{ x(\theta)\phi_1(x(\theta))}{ \lambda_1 + \lambda_2 -( \mu_1 c_1 + \mu_2 c_2 )x(\theta)} J_k(\theta)\sin((n+1)\theta)\sin\theta {\diff \theta}, \label{alpha1}\\
	\alpha_{n,k}^{(2)}	&= \frac{2 r_2^{k-1} \mu_2 c_2 }{ \pi} \int_0^\pi x(\theta)j_k(\theta) \sin((n+1)\theta) {\diff \theta}, \label{alpha2} \\
	\beta_{n}			&= \frac{2 \mu_2 c_2 \lambda_2 }{ \lambda_1 \pi}\int_{0}^\pi \frac{\phi_1(x(\theta))x(\theta)}{ \lambda_1 + \lambda_2 -( \mu_1 c_1 + \mu_2 c_2 )x(\theta)}\sin((n+1)\theta) \sin\theta {\diff \theta},\label{defbetan}
\end{align}
where
\begin{align*}
	J_k(\theta) &= \int_{y_1}^{y_2} \frac{( \lambda_2 - \mu_2 c_2 y^2)y^{k-1}}{ \mu_2 c_2 y^2 -2y\sqrt{ \mu_2 c_2 \lambda_2 }\cos\theta + \lambda_2 } \sin(\Theta_1(y))e^{-\Phi_1(y)} {\diff y} ,\\
	j_k(\theta) &= \frac{(r_2^2+x(\theta))\sin(k\theta)-x(\theta)r_2\sin((k+1)\theta)-r_2\sin((k-1)\theta)}{(1-x(\theta))( \lambda_1 + \lambda_2 -( \mu_1 c_1 + \mu_2 c_2 )x(\theta))}.
\end{align*}

\begin{theorem}\label{Theorem1}
The probabilities $p(0,n)$ for $n = 1, \ldots, a$ can be expressed as a function of $p(0,0)$ by solving the linear system: For $n=0,\ldots, a-1$
\begin{equation}\label{linsys}
	r_2^n p(0,n+1) =\beta_{n} p(0,0)+ \sum_{k=0}^a \alpha_{n,k} p(0,k)
\end{equation}
where $\beta_n$ is given by Equation~\eqref{defbetan} and $\alpha_{n,k}= \alpha_{n,k}^{(1)}+ \alpha_{n,k}^{(2)}$ with $\alpha_{n,k}^{(1)}$ and $\alpha_{n,k}^{(2)}$ being defined by Equation~\eqref{alpha1} and \eqref{alpha2}, respectively. The probability $p(0,0)$ is determined by plugging the solution the linear system into equation~\eqref{P2}, and such that it satisfies the normalization condition~\eqref{normalizing}.
\end{theorem}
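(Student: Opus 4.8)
The plan is to eliminate the functions $P_1$ and $P_2$ from the two relations obtained in Sections~\ref{FunP1} and~\ref{FunP2}, thereby obtaining a closed linear system for the $a+1$ unknown coefficients $p(0,0),\dots,p(0,a)$ of the polynomial $P_2^-$; the normalisation condition then supplies the missing scalar.

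First I would substitute the expression~\eqref{P1x} for $P_1(x)$ into the representation~\eqref{P2} of $P_2(y)$. Since~\eqref{P1x} is affine in $P(0,0)=p(0,0)$ and linear in $P_2^-$, writing $P_2^-(y)=\sum_{k=0}^{a}p(0,k)y^{k}$ this substitution yields
\[
P_2(y)=p(0,0)\,G_0(y)+\sum_{k=0}^{a}p(0,k)\,G_k(y)
\]
with explicit functions $G_0,G_k$ built from $\phi_1$, $\Phi_1$, $\Theta_1$, $X_0$, $Y_0$ and the kernel $K$, and satisfying $G_0(0)=1$, $G_k(0)=0$. The key point is then combinatorial rather than analytic: by its very definition $P_2^-(y)$ is the degree-$a$ truncation of the Taylor expansion of $P_2(y)$ at the origin, so for each $n=1,\dots,a$ one must have $[y^{n}]P_2(y)=p(0,n)$, while the equation coming from the coefficient of $y^{0}$ is automatically satisfied because every integral kernel occurring in~\eqref{P1x} and~\eqref{P2} carries a factor vanishing at $y=0$ (note $K(x,0)=\lambda_2 x\neq0$ for the relevant $x$). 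Equating the coefficient of $y^{n+1}$ on both sides for $n=0,\dots,a-1$ produces exactly the $a$ scalar equations~\eqref{linsys}.

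The bulk of the work---and the step I expect to be the main obstacle---is to perform this coefficient extraction explicitly and to recognise the resulting real integrals as $\alpha^{(1)}_{n,k}$, $\alpha^{(2)}_{n,k}$ and $\beta_n$ of~\eqref{alpha1}--\eqref{defbetan}. I would proceed as in the proofs of the two preceding propositions: on the circle $C_{r_2}$ set $z=r_2e^{i\theta}$, so that the branch $X_0(z)$ equals $x(\theta)$ introduced before the statement (this is precisely the relation $\cos\theta=-\frac{\mu_1 c_1 x(\theta)^2-(\lambda_1+\lambda_2+\mu_1 c_1+\mu_2 c_2)x(\theta)+\lambda_1}{2\sqrt{\mu_2 c_2\lambda_2}\,x(\theta)}$, obtained from $Y_0(x(\theta))+Y_1(x(\theta))=z+\bar z$ together with $Y_0Y_1=r_2^{2}$), and reduce the contour integrals over $C_{r_2}$ to integrals over $[x_1,x_2]$ via the branch-cut structure of $X_0$ and $Y_0$. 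Since $K(X_0(z),y)=\mu_2 c_2\,X_0(z)(y-z)(y-\bar z)$ on $C_{r_2}$, a partial-fraction expansion in $y$ of each rational factor, followed by $[y^{n+1}]\frac{y}{z-y}=z^{-(n+1)}$, turns the $(n+1)$-st coefficient of every kernel into an elementary trigonometric polynomial in $e^{i\theta}$; multiplying the $n$-th equation by $r_2^{n}$ clears the negative powers and produces the $\sin((n+1)\theta)$ factors. Tracking the three sources of terms separately then gives the identification: the $\phi_1(x)\,p(0,0)$ part of~\eqref{P1x} fed through the $\int_{x_1}^{x_2}$ term of~\eqref{P2} yields $\beta_n\,p(0,0)$; the $\int_{y_1}^{y_2}$ part of~\eqref{P1x} with $P_2^-(y)$ replaced by $y^{k}$---which after the substitution is the inner integral $J_k(\theta)$---fed through the same term yields $\alpha^{(1)}_{n,k}\,p(0,k)$; and the $\int_{C_{r_2}}$ term of~\eqref{P2} with $P_2^-(z)$ replaced by $z^{k}$ yields, after recognising $j_k(\theta)$, the term $\alpha^{(2)}_{n,k}\,p(0,k)$ with its factor $r_2^{k-1}$. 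Along the way one has to justify interchanging the finite sum over $k$ with the integrals, as well as the convergence of the generating functions on $C_{r_1}$ and $C_{r_2}$; this is routine under Condition~\eqref{stabcond}.

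It remains to note that this system genuinely expresses $p(0,1),\dots,p(0,a)$ as multiples of $p(0,0)$: existence of a stationary regime (the Lemma, under~\eqref{stabcond}) guarantees that the true coefficients solve~\eqref{linsys}, and uniqueness of the stationary distribution forces the matrix $\bigl(\delta_{n,k}\,r_2^{n}-\alpha_{n,k+1}\bigr)_{0\le n,k\le a-1}$ to be invertible---alternatively one checks directly that its determinant does not vanish. Substituting the resulting proportionality relations $p(0,k)=c_k\,p(0,0)$ back into $P_2^-$, hence into~\eqref{P2} and into~\eqref{P1x} evaluated at $x=1$ (legitimate since $r_1>1$, so $1\in D_{r_1}$), makes $P_2^-(1)$ and $P_1(1)$ proportional to $p(0,0)$; the normalisation~\eqref{normalizing}, namely $\lambda_1 P_2^-(1)+\lambda_2 P_1(1)=\lambda_1+\lambda_2-\mu_1 c_1-\mu_2 c_2$, then determines $p(0,0)$, which completes the proof.
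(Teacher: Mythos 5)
Your strategy is correct and is essentially the paper's own proof: extract the coefficient of $y^{n+1}$ from~\eqref{P2}, substitute~\eqref{P1x} for $P_1$, pass to the parametrizations $z=r_2e^{i\theta}$, $X_0(z)=x(\theta)$ on $C_{r_2}$ and $x=x(\theta)$ on $[x_1,x_2]$, and identify the resulting trigonometric integrals with $\beta_n$, $\alpha^{(1)}_{n,k}$ and $\alpha^{(2)}_{n,k}$, then fix $p(0,0)$ by the normalization~\eqref{normalizing}. The only cosmetic difference is that the paper performs the coefficient extraction by expanding $1/K(x,y)$ in Chebyshev polynomials of the second kind, $U_n(\cos\theta_2(x))=\sin((n+1)\theta)/\sin\theta$, which is exactly the partial-fraction computation $K(X_0(z),y)=\mu_2c_2\,X_0(z)(y-z)(y-\bar z)$ that you propose.
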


\begin{proof}
By using the fact that
$$
\frac{1}{K(x,y)} = \frac{1}{ \lambda_2 x}\sum_{n=0}^\infty U_n\left(\cos\theta_2(x)\right)\left(\frac{ y}{r_2} \right)^n,
$$
where $U_n(x)$ is the $n$th Chebyshev polynomials of the second kind \cite{Abramowitz}, $\theta_2(x) \in [0,\pi]$ is such that for $x\in [x_1,x_2]$
\begin{align*}
	\cos\theta_2(x) &= -\frac{ \mu_1 c_1 x^2-( \lambda_1 + \lambda_2 + \mu_1 c_1 + \mu_2 c_2 )x+ \lambda_1 }{2x\sqrt{ \mu_2 c_2 \lambda_2 }},\\
	\sin\theta_2(x) &= \frac{\sqrt{-\Delta_2(x)}}{2x \sqrt{ \mu_2 c_2 \lambda_2 }}
\end{align*}
and then $Y_0(x+0i) =\sqrt{\frac{ \lambda_2 }{ \mu_2 c_2 }}e^{-i\theta_2(x)}$,
we deduce from Equation~\eqref{P2} that
\begin{align*}
	r_2^n p(0,n+1) &= \\
	& \frac{1}{2\pi \lambda_1 } \int_{x_1}^{x_2}\frac{P_1(x) ( \lambda_1 - \mu_1 c_1 x^2) \sqrt{-\Delta_2(x)} }{x^2( \lambda_1 + \lambda_2 -( \mu_1 c_1 + \mu_2 c_2 )x)} U_n\left(\cos\theta_2(x)\right) {\diff x} \\
	&+ \frac{1}{2 \pi i \lambda_2 } \int_{C_{r_2}} \frac{(z-1)( \lambda_2 - \mu_2 c_2 z^2)X_0(z)P_2^-(z)}{z^2(z-X_0(z))} U_n\left(\cos\theta_2(X_0(z))\right) {\diff z}.
\end{align*}

We have
\begin{multline*}
	\frac{1}{2 \pi i \lambda_2 } \int_{C_{r_2}} \frac{(z-1)( \lambda_2 - \mu_2 c_2 z^2)X_0(z)P_2^-(z)}{z^2(z-X_0(z))} U_n\left(\cos\theta_2(X_0(z))\right) {\diff z} = \\
	\sum_{k=0}^a p(0,k) \frac{1}{2 \pi i \lambda_2 } \int_{C_{r_2}} \frac{(z-1)( \lambda_2 - \mu_2 c_2 z^2)X_0(z) z^k}{z^2(z-X_0(z))} U_n\left(\cos\theta_2(X_0(z))\right) {\diff z}.
\end{multline*}
For the integrals appearing in the above equation, we note that
\begin{multline*}
	\frac{1}{2 \pi i \lambda_2 } \int_{C_{r_2}} \frac{(z-1)( \lambda_2 - \mu_2 c_2 z^2)X_0(z) z^k}{z^2(z-X_0(z))} U_n\left(\cos\theta_2(X_0(z))\right) {\diff z} = \\
	-\frac{1}{\pi r_2} \int_{C_{r_2}}\frac{(z(\theta)-1)X_0(z(\theta))}{z(\theta)-X_0(z(\theta))} z(\theta)^{k-1}\sin((n+1)\theta) {\diff \theta},
\end{multline*}
where $z(\theta)=r_2e^{i\theta}$ and where we have used the fact that
$$
U_n(\cos\theta)=\frac{\sin((n+1)\theta)}{\sin \theta}.
$$
Simple computations show that the integral in the right hand side of the above equation is equal to $\alpha^{(2)}_{n,k}$ defined by Equation~\eqref{alpha2}.

By using the fact that 
$$
2\sqrt{ \mu_2 c_2 \lambda_2 }\sin\theta_2(x) \frac{\diff \theta_2(x)}{\diff x}= \frac{ \mu_1 c_1 x^2- \lambda_1 }{x^2}
$$
we easily obtain that 
\begin{multline*}
	\frac{1}{2\pi \lambda_1 } \int_{x_1}^{x_2}\frac{P_1(x) ( \lambda_1 - \mu_1 c_1 x^2) \sqrt{-\Delta_2(x)}}{x^2( \lambda_1 + \lambda_2 - ( \mu_1 c_1 + \mu_2 c_2 ) x)} U_n\left(\cos\theta_2(x)\right) {\diff x} = \\
	\frac{2 \mu_2 c_2 \lambda_2 }{\pi \lambda_1 } \int_{0}^{\pi}\frac{P_1(x(\theta)) x(\theta) }{\lambda_1 + \lambda_2 - ( \mu_1 c_1 + \mu_2 c_2 )x(\theta)} \sin((n+1)\theta)\sin\theta {\diff \theta}
\end{multline*}

Simple manipulations show that the above integral can be expressed as
$$
\beta_n p(0,0) + \sum_{k=0}^a \alpha_{n,k}^{(1)} p(0,k)
$$
with $\alpha_{n,k}^{(1)}$ and $\beta_n$ defined by Equations~\eqref{alpha1} and~\eqref{defbetan}, respectively. We finally obtain the linear system~\eqref{linsys}.
\end{proof}

In the next section, we show how the above theorem can be used to carry out numerical experiments.

\section{Numerical experiments}\label{numeric}

We report in this section some numerical results in order to illustrate the computations performed in the previous sections and the effects of the implementation of a (unilateral) \emph{trunk reservation} in a fog system composed by two large data centers. We focus on the blocking probabilities at each DC.

In all cases described below, it is considered that DC 1 is operating under overload conditions, $ \lambda_1 > \mu_1 c_1 $ and Condition~\eqref{stabcond} is always assumed. Thus, recalling the notation of $B_i$ as the loss rate of customers first assigned to DC $i$, for $i = 1,2$, we have
\begin{enumerate}[i.]
\item If $\mu_1 c_1 + \mu_2 c_2 < \lambda_1 + \lambda_2$ and $a \geq 1$, the loss probabilities are calculated by using Theorem~\ref{Theorem1}.
It is worth noticing that if $a$ assumes very large values, the system performs as uncoupled DCs. This limiting result $B^{\infty}$ is given by
\begin{align} \label{Bind}
	B_1^{\infty} = 1 - \frac{\mu_1 c_1}{\lambda_1} \quad\text{and} \quad
	B_2^{\infty}=\begin{dcases}
		0							&\text{if } \lambda_2 < \mu_2 c_2 \\
		1 - \mu_2 c_2 / \lambda_2 	&\text{if } \lambda_2 > \mu_2 c_2.
	\end{dcases}
\end{align}
\item If $ \lambda_1 + \lambda_2 > \mu_1 c_1 + \mu_2 c_2 $ and $a = 0$, there is no protection for the requests that originally arrive at DC 2; every request blocked at DC 1 is systematically forwarded to DC 2, regardless of the amount of idle servers in this DC. This scenario is a particular case of the offloading scheme studied in \cite{FGRT}, where 2 parallel DCs forward demands they cannot hold to each other with a given probability. Namely, in the notation of that paper, we have $p_1 = 1$ and $p_2 = 0$, where $p_i$ is probability of DC $i$ forward a request to DC $3-i$, for $i = 1,2$.

By using \cite[Theorem 3]{FGRT}, we can calculate the quantity $B^0$, using
\begin{align} \label{Bsha}
	B_1^0 = P(0,0) \qquad \text{and} \qquad 
	B_2^0 = \frac{\lambda_2 + \lambda_1 (1-P(0,0)) - \mu_1 c_1 - \mu_2 c_2}{\lambda_2}.
\end{align}
where
\begin{align*}
	P(0,0) = 
	\begin{dcases}
		\frac{ \lambda_1 - \mu_1 c_1 }{ \lambda_1 \phi_2(1)} & \textit{if } \lambda_2 > \mu_2 c_2  \\
		\frac{ \lambda_1 + \lambda_2 - \mu_1 c_1 - \mu_2 c_2 }{ \lambda_1 \phi_2(1)} & \textit{if } \lambda_2 < \mu_2 c_2 ,
	\end{dcases}
\end{align*}
with,
\begin{align*}
	\phi_2(y) = \exp{\left(\dfrac{y}{\pi} \int_{x_1}^{x_2} \dfrac{\lambda_1 - \mu_1 c_1 x^2}{x K(x,y)} \Theta_2(x) {\diff x} \right)}
\end{align*}
and
\begin{align*}
	\Theta_2(x) = \mathrm{ArcTan} \left(\dfrac{\sqrt{-\Delta_2(x)}} {(x+1)( \lambda_1 - x \mu_1 c_1 ) + x ( \lambda_2 - \mu_2 c_2 )} \right).
\end{align*}
\end{enumerate}

To realize the effects of the choice of the parameter $a$ on the loss probabilities of both DCs, we compare the blocking rates in this system with the intuitive boundaries, $B^0$ and $B^{\infty}$, defined previously in Equations~\eqref{Bind}~and~\eqref{Bsha} respectively.

In Figure~\ref{Figure3}, we show the loss probabilities of each DC, for the case where it is assumed that both DC have the same capacity, taken as unity and DC 2 is twice as fast as DC 1, whose mean service time is taken as the unity. Additionally, the arrivals occur at rate $3$ and $5$ at DCs 1 and 2 respectively.
\begin{figure}[htbp]
	\centering
	\includegraphics[scale=0.7]{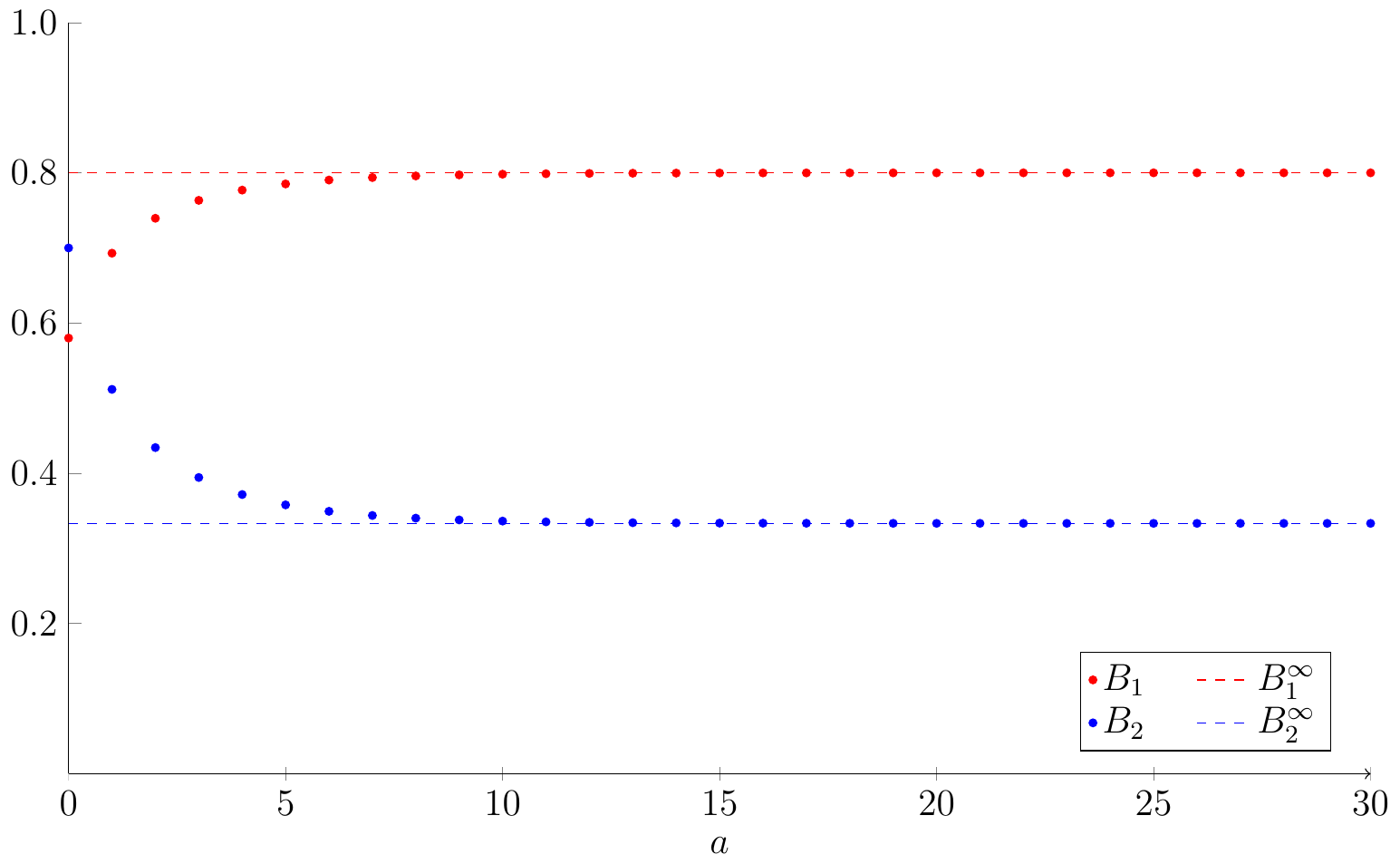}	
	\caption[caption]{Blocking rates for \\ $\lambda = (3.,5.)$, $\mu = (1.,2.)$ and $c= (1.,1.)$} \label{Figure3}
\end{figure}

We see from Figure~\ref{Figure3} that small values of the threshold $a$ can be used to decrease the blocking rate of requests originally assigned to the second DC without increasing by too much the rejection of forwarded requests. We notice that for both DCs, the blocking rates approach very quickly to their limits $B^\infty$, since DC 2 is highly loaded, so it is expected that any reservation near the full occupancy will quickly hinder their cooperative scheme, isolation is obtained as $a$ grows.

In Figures~\ref{Figure4}~and~\ref{Figure5}, we compare the implementation of this policy in an offloading perspective, where we think of DC 2 as a large backup  facility for the operation of DC 1. It is assumed now that the DCs have the same service rate, taken as the unity, but DC 2 has 10 times the size of DC 1. We contrast the cases, where in Figure~\ref{Figure4} DC 2 is operating originally underloaded, $\mu_2 c_2 > \lambda_2$, and in Figure~\ref{Figure5} both DCs are overloaded.
\begin{figure}[htbp]
	\centering
	\includegraphics[scale=0.7]{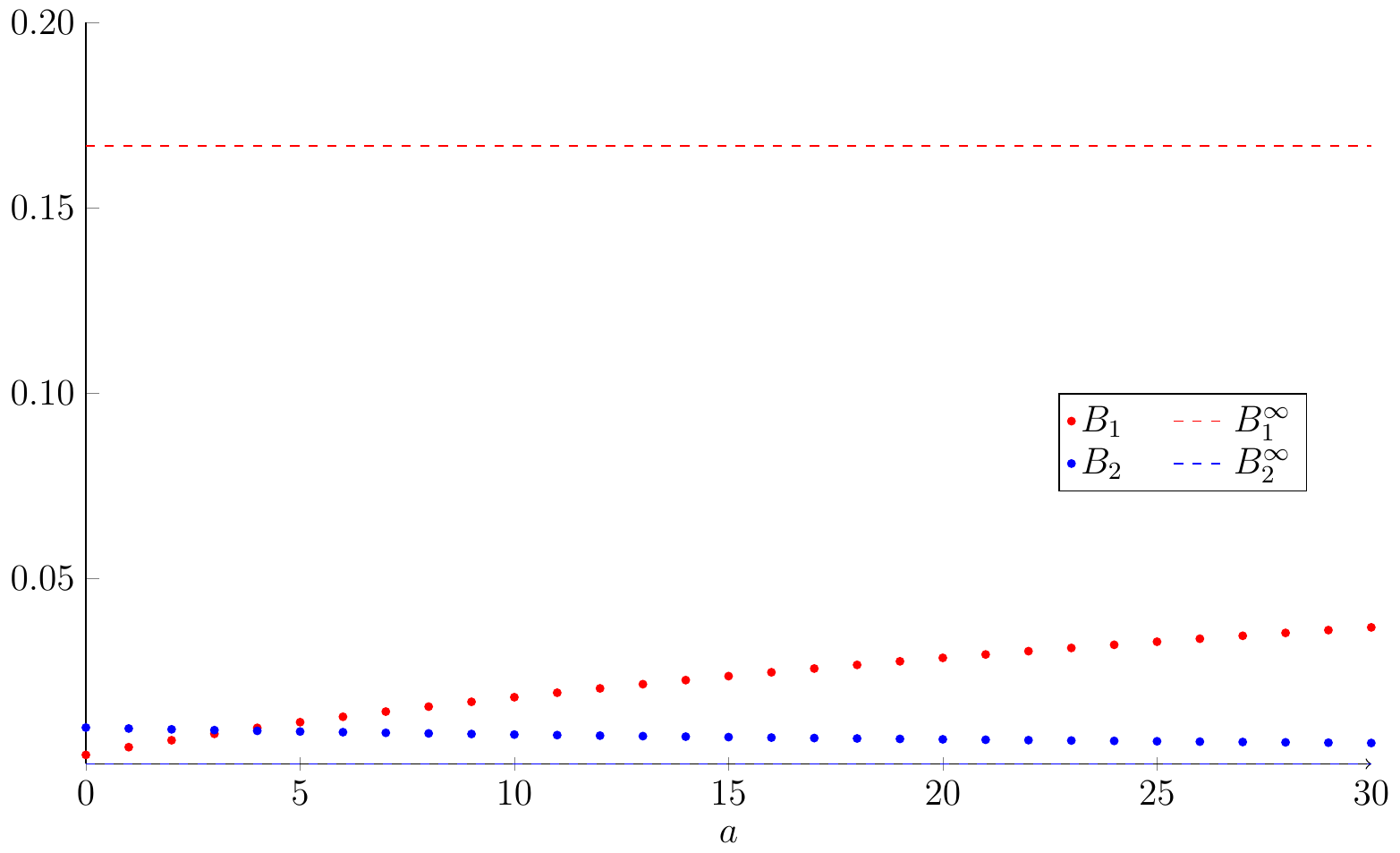}	
	\caption[caption]{Blocking rates for \\ $\lambda = (1.2,9.9)$, $\mu = (1.,1.)$ and $c= (1.,10.)$}\label{Figure4}
\end{figure}

\begin{figure}[htbp]
	\centering
	\includegraphics[scale=0.7]{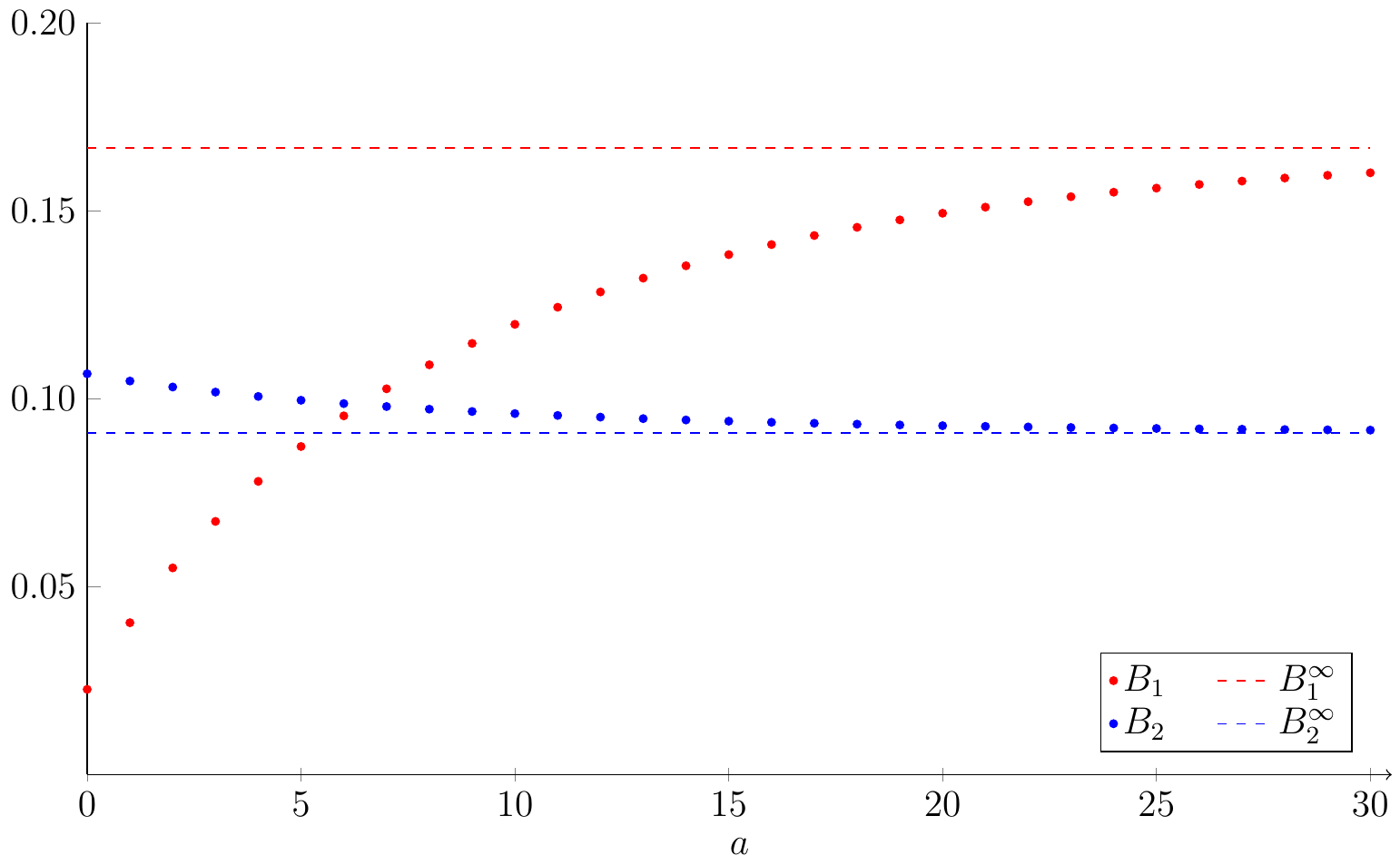}	
	\caption[caption]{Blocking rates for \\ $\lambda = (1.2,11.)$, $\mu = (1.,1.)$ and $c= (1.,10.)$}\label{Figure5}
\end{figure}

It is visible that the trunk reservation  policy achieves some loss reduction on the first DC (compared with the total isolation case) without jamming the service for those requests which are originally assigned to the second DC. In both cases, this policy enables sustainable cooperation, where DC 2 could share its resources without been over penalized, particularly because of its size.

However, when capacities $c_1$ and $c_2$ are similar, the scheme introduces little improvement, especially when the second DC is highly loaded. In the context of fog computing, this policy enables a decentralized cooperative scheme between the edge and core of the cloud. The results obtained in this paper gives some hints to optimize the system, using a distributed routine by the implementation of a saving threshold that can enable better service level and improve network usage.

\section*{Acknowledgement}

The authors thank Philippe Robert for the proof of Proposition~\ref{conv}.

\bibliographystyle{plain}
\bibliography{walk}

\end{document}